\def\BibTeX{{\rm B\kern-.05em{\sc i\kern-.025em b}\kern-.08em
		T\kern-.1667em\lower.7ex\hbox{E}\kern-.125emX}}
\newtheorem{lemma}{Lemma}
\begin{document}
	\title{A Rotation-based Method for  Precoding in Gaussian 
		MIMOME Channels\\
	}
	
	\author{\IEEEauthorblockN{Xinliang Zhang, Yue Qi, 
	\IEEEmembership{Student Member, IEEE}, and 
	Mojtaba Vaezi, \IEEEmembership{Senior Member, IEEE}}
	\thanks{This paper was partially presented in the IEEE International 
	Symposium on Personal, Indoor, and Mobile Radio Communications, September 2020 \cite{zhang2020new}.}
\thanks{The authors are with the Department of Electrical and Computer Engineering, Villanova University, Villanova, PA, USA (e-mail:\{xzhang4, yqi, mvaezi\}@villanova.edu).}

	}

	\maketitle

	\begin{abstract}
		The problem of maximizing secrecy rate of  multiple-input 
		multiple-output multiple-eavesdropper (MIMOME) channels 
		with  
		arbitrary numbers of antennas at each node is studied in this paper. 
		{ First, the optimization problem 
		corresponding to the secrecy capacity of the MIMOME  channel is 
		converted to an equivalent optimization   based on Givens rotations 
		 and eigenvalue decomposition of the covariance matrix. In this new 
		 formulation,  precoder is a rotation matrix which results in a positive 
		 semi-definite (PSD) covariance matrix  by construction. This removes the 
		 PSD matrix constraint and makes the problem easier to tackle.} Next, 
		 {a 
		Broyden-Fletcher-Goldfarb-Shanno (BFGS)-based algorithm} is 
		developed to find the rotation and power allocation parameters. 
		{ Further, the generalized singular value decomposition 
		(GSVD)-based precoding is used to initialize this algorithm.}   The 
		proposed {rotation-BFGS} method
		provides an efficient approach to find a near-optimal transmit strategy for 
		the MIMOME channel and outperforms various  
		state-of-the-art analytical and numerical methods. In particular, the rotation-BFGS 
		precoding achieves higher secrecy rates than the celebrated GSVD precoding,  with a reasonably higher computational 
		complexity.  Extensive numerical results  elaborate on the effectiveness of 
		the rotation-BFGS precoding.
		The new framework developed in this paper can be applied to a variety of similar problems in the context of multi-antenna channels with and without secrecy. 
		
	\end{abstract}
	
	\begin{IEEEkeywords}
		Physical layer security, MIMO wiretap channel, secrecy capacity, 
		beamforming, precoding, covariance, rotation.
	\end{IEEEkeywords}
	%
	\section{Introduction}\label{sec_intro}
	

	As a complement to higher-layer security measures, \textit{physical layer security} has emerged as a significant technique for security in the lowest layer of communication, i.e., the physical layer.  
	Founded on  information-theoretic security,  which is  built on classical 
	Shannon's 
	 notion of perfect secrecy, physical layer security can offer unbreakable 
	security,  unlike conventional secret-key-based cryptosystems.
	Physical layer security was laid   in
	the 1970s in Wyner’s seminal  work on the \textit{wiretap channel} \cite{wyner1975wire}  where the idea of secure communication   based on the 
	communication channel itself without 
	using encryption keys was first 
	introduced. In this work, 
	Wyner proved that in a wiretap channel (a channel in which a
	transmitter  conveys information to a legitimate receiver in the presence of an  eavesdropper)  communication can be both 
	robust to transmission errors (\textit{reliable}) and  confidential 
	(\textit{secure}), to a certain degree, provided that  the legitimate user's  
	channel is better than the eavesdropper's  
	channel\footnote{Later   in
		the 1990s, Maurer proved that  secret key generation through public communication over
		an insecure yet authenticated channel is possible even when a legitimate user has
		a worse channel than an 
		eavesdropper\cite{maurer1993secret}.}. He    
		established the capacity of the \textit{degraded} wiretap channel. Later, 
		Csiszar and Korner \cite{csiszar1978broadcast} generalized this result
	to arbitrary, not necessarily degraded, wiretap channels. 
	
	In the past decades, physical layer security has  been applied to 
	enhance  
	the 	classical wiretap channels (e.g., by including more realistic  
	assumptions) and to study advanced wiretap channels (e.g., quantum 
	communication \cite{cai2004quantum, wu2019security}). 
	Particularly, as
	multiple-input multiple-output (MIMO) networks continue to flourish worldwide, a
	significant effort has been made to study the MIMO wiretap channel  
	which allows for the exploitation of
	space/time/user dimensions of wireless channels for
	secure communications. Specifically,
	secrecy capacity of  Gaussian multiple-input multiple-output 
	multiple-eavesdropper (MIMOME) channels  under 
	an average total power constraint was established independently in  \cite{khisti2010secure,oggier2011secrecy,liu2009note}. The capacity result  is 
	abstracted as an optimization problem over  input covariance matrix. This problem is non-convex   and its optimal solution is known only for limited settings \cite{shafiee2009towards, parada2005secrecy,fakoorian2013full, loyka2016optimal,vaezi2017journal}.   
	
	Among notable sub-optimal solutions that can be applied to  the MIMOME channel is 
	the \textit{generalized singular value decomposition} (GSVD)-based precoding 
	\cite{fakoorian2012optimal}. GSVD-based precoding  decomposes transmitted channel matrices into several parallel subchannels and confidential information is transmitted over subchannels where the legitimate user is stronger than the eavesdropper.
	This method gives a  
	closed-form solution for achievable secrecy rate which is relatively fast and is 
	asymptotically optimal at high signal-to-noise ratios (SNRs). However, its performance 
	is   not good at certain settings, e.g., when the 
	eavesdropper has a single antenna while other nodes have multiple antennas
	\cite{vaezi2017journal}. Another important sub-optimal solution is Li \textit{et al.'s}	
	alternating optimization and water filling  (AOWF) algorithm \cite{li2013transmit} 
	which alternates the original optimization problem to a convex problem and  finds the corresponding  Lagrange multipliers in an iterative manner. 
	AOWF is more computationally expensive than GSVD-based precoding but it can provide a better secrecy rate in some settings.  The performance of this method also varies depending on the number of 
	antennas at different nodes. For example, its performance is not as 
	good as the GSVD-based precoding when  the number of 
	antennas at the eavesdropper is greater than that of the transmitter.
	There are also  other numerical solutions for this  
	optimization problem \cite{loyka2015algorithm, steinwandt2014secrecy, li2009transmitter}. 
	 Specifically, in \cite{loyka2015algorithm}, a barrier method based 
	iterative algorithm  is tailored to obtain  global optimal  with guaranteed 
	convergence. However, it is not straightforward to obtain the barrier parameter  
	and this involves a challenging optimization problem. In addition, this solution is devised based on an upper bound  
	custom-made for the MIMOME channel and its extension to other related 
	problems is not straightforward.

	Recently, based on a trigonometric  parameterization of the 
	covariance matrix, a closed-form solution for optimal  
	precoding and power allocation of  the MIMOME 
	channel with two transmit antennas was obtained  in 
	\cite{vaezi2017journal, vaezi2017isit}.
	 This approach in finding the optimal covariance matrix is completely 
	different from existing linear beamforming methods. It does not require  
	degradedness condition of \cite{fakoorian2013full} and \cite{loyka2012optimal}, and thus provides the optimal solution for both full-rank and 
	rank-deficient cases in one shot. The above beamforming and power allocation schemes are, however, limited to two transmit antenna cases, and the optimal transmit covariance matrix is still open in general.
	Givens rotation  has been previously investigated  for quantizer  
	design in MIMO broadcast channels  \cite{sadrabadi2006channel}.
		Later, Givens rotation is used as a beamformer to maximize  sum-rate  under average  signal-to-noise ratio (SNR) with limited feedback  in   multiuser multiple-input single-output (MISO) channels  \cite{de2010optimized}. 
		However, Givens rotation theory is the first time to be applied on MIMO wiretap channel to the authors' knowledge. 
		
	In this paper, we use Givens rotation matrices to generalize the approach of
		\cite{vaezi2017isit} to 
		 arbitrary numbers of antennas 
		at each node, and introduce a new method
	 for
	precoding and power allocation in the MIMOME channel. In this 
	approach, without loss of generality, the precoding matrix is formed 
	using a rotation matrix. 
	Then, the covariance matrix can be seen as an operator to 
	appropriately	stretch {(by power allocation)} and rotate 
	{(by precoding)} the input symbols to form a transmit signal that 
	best fits the channels of the legitimate user and 
	eavesdropper. The capacity 
	expression is then transformed into {optimizing rotation angles and 
	power allocation parameters.
		This problem is very different from the original problem, which requires exploring new techniques to solve it. One advantage of the new problem formulation is that it converts the matrix covariance constraint (symmetric and positive semi-definiteness (PSD)) into a set of linear constraints and  thus simplifies the optimization problem.} 
	We then provide a {numerical} solution for the new optimization 
	problem and show that it outperforms the existing methods in various antenna 
	settings. The method in \cite{vaezi2017journal} can be seen 
	as using two-dimension Givens rotation for precoding.

	\subsection{Contributions}\label{sec_intro_sub_ctb}
	The main contributions of 
	this paper are listed below:
	\begin{itemize}

		\item We use a rotation modeling method for the 
		parameterization of the covariance matrix. 
		{This parameterization gives a new representation of the 
		capacity expression for the MIMOME channel  for which more efficient 
		solutions can be exploited. Particularly, the PSD constraint  on the covariance 
		matrix is removed because in the proposed method the covariance matrix is 
		PSD by construction. }
		
		\item For $n_t=1$ and $n_t=2$, the proposed scheme reduces to those in  
		\cite{parada2005secrecy} and \cite{vaezi2017journal}, respectively, for which 
		closed-form solutions are known. For $n_t\ge 3$, finding a closed-form solution is still 
		challenging. In such a case, we introduce a 
			Broyden-Fletcher-Goldfarb-Shanno (BFGS)-based method to iteratively solve 
			the rotation and power allocation parameters.  We name this 
			approach    rotation-BFGS method  in which a rectifier is designed to 
			remove the constraints. Numerical results in different antenna 
			settings confirm that the proposed scheme 
		works better than the well-known GSVD and AOWF. 
		Specifically, the 
		proposed approach outperforms GSVD when $n_e < n_t$, and AOWF 
		approach when $n_e\ge n_t$, where $n_e$ is the number of antennas at the 
		eavesdropper. Particularly, the gap between the proposed  and GSVD-based methods 
		is remarkably high when the eavesdropper has a single antenna. 
		
		\item To improve the  computational complexity of the proposed algorithm (by reducing 
		the number of iterations), we develop an algorithm to  exploit  GSVD as 
		an initialization for our rotation-BFGS method. This initialization 
		improves 
		the results and reduces the computational complexity as it reduces the 
		number of iterations in the optimization problem.
	\end{itemize}
	
	\subsection{Other Related Works}\label{sec_intro_sub_rWork}
%
		An interesting aspect of the proposed  approach is its generality and its great potential for  extension to other related problems.  The MIMOME channel  has turned out as a fundamental  tool for the study of physical-layer security in many other related problems throughout the past decade. Many solutions developed for  the MIMOME
	has appeared to be instrumental in designing transmit strategies that 
	maximize the secrecy rate of extensions of this basic channel model to MIMO 
	channels with multiple eavesdroppers \cite{li2013transmit}, secure relaying 
	\cite{yang2013cooperative,huang2011cooperative}, ergodic secrecy rate 
	\cite{li2011ergodic}, finite alphabet signaling 
	\cite{wu2012linear,zeng2012linear,aghdam2018overview}, 
	artificial noise \cite{aghdam2017joint} and cooperative jammer 
	\cite{yang2016limited}, 
	among others. 
		The rotation method is also applicable to any problem that can be cast 
		as an optimization problem over a covariance matrix.
	Therefore, it is worth studying the optimal covariance matrix of the MIMOME 
	channel as a general tool for  physical-layer security in various MIMO settings.

	\subsection{Organizations and Notations}\label{sec_intro_sub_org}
	The  remainder of this paper is organized as follows. Section~\ref{sec_model} 
	describes the system model and  related works. 
	Section~\ref{sec_rotBasic}
	introduces and elaborates on the  {Givens rotation
		and reformulates} the secrecy capacity for the MIMOME channel. Section~\ref{sec_alg}, 
	details  {a rotation-BFGS} based   algorithm to optimize the 
	achievable secrecy rate.  
	In Section~\ref{sec_V_result},  numerous  simulation results are carried out to 
	demonstrate the 
	effectiveness of the proposed method. 
	Section~\ref{sec_conclu} draws the conclusion.
	
		Notations: Bold lowercase letters denote column vectors and bold 
	uppercase letters denote matrices. 	$|x|$ and 	${\rm log}_2(x)$ denote the absolute 
	value and the binary logarithm of the scalar $x$.  
	$ \mathbf{A}(i,j) $ denotes the entry $(i,j)$ of matrix 
	$\mathbf{A}$. Besides, $(\mathbf{A})^T$, ${\rm tr}(\mathbf{A})$,  
	$|\mathbf{A}|$, 
	and 
	${\rm diag}(\mathbf{A})$ are the transpose,
	trace, determinate,  and diagonal   of the matrix $\mathbf{A}$. 
	{$[\mathbf{A}]^+$ replaces the negative elements with zeros.}
	$\mathbf{I}_n$ is a 
	$n\times n$ identity matrix. 	And $E\{\cdot\}$ is the expectation of 
	random variables.


	\begin{figure}[t]
		\centering
		\includegraphics[width=0.4\textwidth]{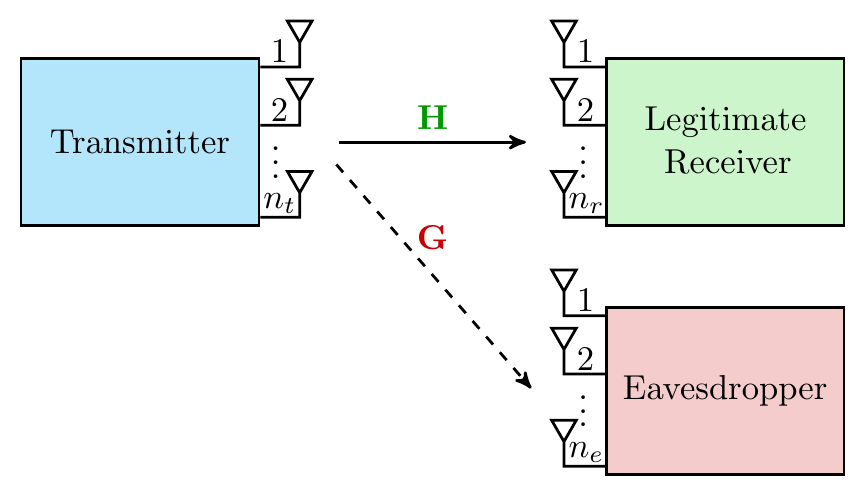}
		\caption{The MIMOME channel with $n_t$, $n_r$, and $n_e$  antennas at the 
		transmitter, legitimate receiver, and eavesdropper.}
		\label{fig_figMIMOME}
	\end{figure}
	
	\section{System Model and Related Works}\label{sec_model}
	\subsection{System Model}\label{sec_model_sub_model}

	We consider a  MIMOME channel  with $n_t$ antennas at the transmitter, 
	$n_r$ antennas at the receiver, and  $n_e$ antennas at the eavesdropper, as depicted 
	in Fig.~\ref{fig_figMIMOME}. The transmitter knows
		the perfect channel state information (CSI) of 
		both users\footnote{A perfect CSI is assumed, as we are deriving the 
		theoretical limits. 
		This may  
		provide an upper bound in terms of achievable secrecy rates. The method we are developing in this paper is, however, applicable to the case with imperfect CSI. In our future 
		works, we will relax this idealized assumption and consider the practical 
		scenarios with imperfect CSI.}.
	The received signals at the legitimate receiver and the eavesdropper
	can be, respectively, expressed as
	\begin{subequations}
		\begin{align}
		\mathbf{y}_r = \mathbf{Hx} + \mathbf{w}_r,\label{eq_recSig1}\\
		\mathbf{y}_e = \mathbf{Gx} + \mathbf{w}_e,\label{eq_recSig2}
		\end{align}
	\end{subequations}
	in which $\mathbf{H} \in \mathbb{R}^{n_r \times n_t}$  and $\mathbf{G} 
	\in \mathbb{R}^{n_e \times n_t}$ are the channels corresponding to the 
	receiver and eavesdropper,  $ \mathbf{x} \in 
	\mathbb{R}^{n_t}$ is the transmitted signal,  and 
	$\mathbf{w_r} \in \mathbb{R}^{n_r}$ 
	and $\mathbf{w_e}\in \mathbb{R}^{n_e}$ are independent and identically 
	distributed (i.i.d)  Gaussian noises with zero means and identity covariance 
	matrices. A representation of secrecy capacity is 
	given by \cite{liu2009note} 
	\begin{align}\label{eq_optOrg}
	\textmd{(P1)}\quad\mathcal{C}_s = \max\limits_{\mathbf{Q}\succeq\mathbf{0}, 
		{\rm tr}(\mathbf{Q})\leq P_t}\frac{1}{2}\log_2 
	\frac{|\mathbf{I}_{n_r}+\mathbf{H}\mathbf{Q}\mathbf{H}^T|}{|\mathbf{I}_{n_e}
		+\mathbf{G}\mathbf{Q}\mathbf{G}^T|}.
	\end{align}
	\noindent Based on Slyvester’s determinant theorem, i.e., 
	$\det(\mathbf{I} + \mathbf{XY})=\det(\mathbf{I} + \mathbf{YX})$, 
	\eqref{eq_optOrg} can also be rewritten as
	\begin{align}\label{eq_optVer1}
	\quad\mathcal{C}_s = \max\limits_{\mathbf{Q}\succeq\mathbf{0}, 
		{\rm tr}(\mathbf{Q})\leq P_t}\frac{1}{2}\log_2 
	\frac{|\mathbf{I}_{n_t}+\mathbf{H}^T\mathbf{H}\mathbf{Q}|}{|\mathbf{I}_{n_t}
		+\mathbf{G}^T\mathbf{G}\mathbf{Q}|},
	\end{align}
   where 
	$\mathbf{Q}=E\{\mathbf{xx}^T\} \in \mathbb{R}^{n_t \times n_t}$ is 
	the covariance matrix of the channel input $\mathbf{x}$ and $ P_t $ 
	is the total transmit power. 
	$ 
	\mathbf{Q} $ is symmetric and  positive semi-definite (PSD)  by definition.
	
	The architecture of the linear precoding and power allocation is depicted in 	Fig.~\ref{fig_figTx}. 
	In this figure, $s_1,\hdots,s_{n_t}$ are input symbols which are independent and identically
	distributed Gaussian random variables with zero means
	and unit variances, $\lambda_{i}$s are power allocation coefficients,    $ \mathbf{V}$ is the precoding matrix, and ${\bf x} = [x_1,\hdots,x_{n_t}]^T$ is the transmit vector whose covariance is  $ \mathbf{Q} $.

	\begin{figure}[t]
		\centering
		\includegraphics[width=0.4\textwidth]{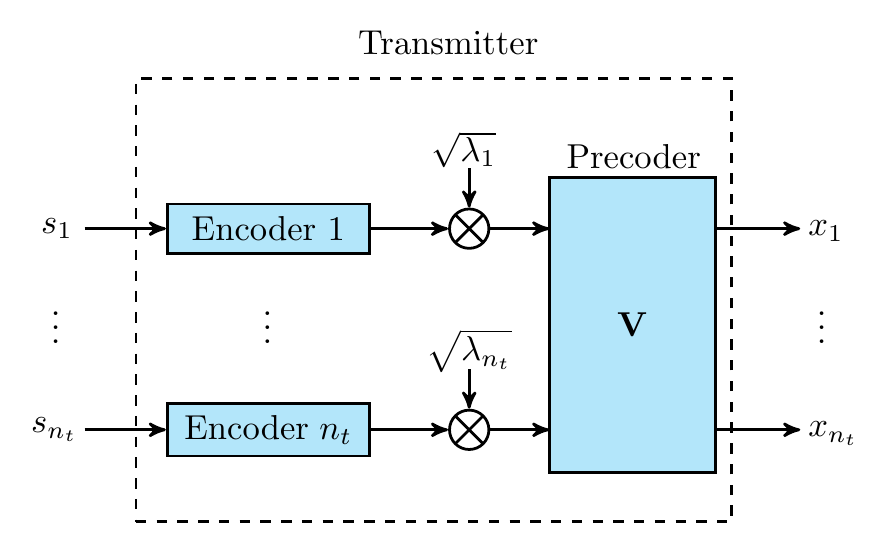}
		\caption{The structure of linear precoding and power 
		allocation.  $s_1,\hdots,s_{n_t}$ are the independent input symbols,  
		$\lambda_1,\hdots,\lambda_{n_t}$ are their corresponding allocated powers, 
		and $x_1,\hdots,x_{n_t}$ are the transmitted signals.}
		\label{fig_figTx}
	\end{figure}
	
	\subsection{Existing Results}\label{sec_model_sub_exist}
	The optimal transmission over the MIMOME channel is still an open problem 
	in general. However, there are a number of notable analytical results for special numbers 
	of antennas as well as numerical results as listed below.
	\subsubsection{Analytical Solutions} \label{sec_model_sub_exist_1}
	An analytical capacity-achieving covariance matrix is known only for special cases. 
	These are limited to:
	\begin{itemize}
		\item 
		$n_t=1$: this is the single-input multiple-output (SIMO) case in which $ 
		\mathbf{Q} $ is a scalar and the optimal solution is either $P_t$ or $0$ 
		\cite{parada2005secrecy}.
		
		\item $n_r=1$: the so-called  multiple-input single-output 
		multiple-eavesdropper (MISOME) channel in which {\it generalized 
			eigenvalue decomposition} of $ \mathbf{H} $ and $ \mathbf{G} $ 
		achieves the capacity \cite{misomeOpt}.
		
		\item $ n_t =2$, $ n_r=2 $, and $n_e=1$: the optimization problem is  shown to be the Rayleigh quotient 	and  optimal signaling, which is the maximum eigenvalue of this problem, 
		is unit-rank 
		\cite{shafiee2009towards}. 
		
		\item  $ n_t =2$: in which the secrecy capacity is obtained  
		by modeling the covariance matrix as a $2\times 2$ rotation matrix 
		\cite{vaezi2017journal, vaezi2017isit}.
		
		\item  $ 
		\mathbf{Q} $ is full-rank (which implies 
		$\mathbf{H}^T\mathbf{H}-\mathbf{G}^T\mathbf{G}\succ\mathbf{0}$) and also $ P_t 
		$ is greater than a certain threshold \cite{fakoorian2013full, loyka2012optimal}: in this case the problem is convex and Karush–Kuhn–Tucker (KKT) conditions are used to find the optimal 	$\mathbf{Q} $.
		
		It is worth noting that, as $n_t$ grows,   few channel realizations satisfy the above conditions. For $n_t=3$, $n_r=3$, and 
		$n_e=1$, for example, the probability of having a full-rank solution is 
		less than $18.2\%$\footnote{This is obtained by Monte Carlo 
		experiments with  
			$10^6$ trails where $\mathbf{H}$ and $\mathbf{G}$  have the same 
			distributions.}. This value  decreases when $n_e$ goes up. Therefore, it 
			can be 
		said that an analytical solution for the MIMOME channel is still an open 
		problem in many practical cases.
	\end{itemize}
	
	\subsubsection{Suboptimal Analytical Solution}\label{sec_model_sub_exist_2}
	For a  general MIMOME channel, a sub-optimal solution  can be obtained 
	using GSVD-based beamforming \cite{fakoorian2012optimal}. By 
	applying GSVD 
	on   $ \mathbf{H} $ and $ \mathbf{G} $, the optimization problem 
	\eqref{eq_optVer1} is simplified to a set of parallel non-interfering channels  whose  optimal power allocation can be obtained using 
	Lagrange multiplier 
	and KKT  conditions. Since parallelization using GSVD 
	does not necessarily convert this problem into an equivalent one,  GSVD-based 
	beamforming is not the optimal solution in general.
		It is not even 
		close enough to the capacity in some cases. For example, 
		GSVD-based 
		beamforming achieves less than $70\%$ of the capacity when $n_t=3$, 
		$n_r=2$, and $n_e=1$.

	\subsubsection{Numerical Solutions}\label{sec_model_sub_exist_3}
	There are still important cases of the MIMOME for which optimal $ 
	\mathbf{Q} $ is unknown. Due to the intractability of the problem in an 
	analytical form, numerical solutions have been developed to tackle this 
	problem. The AOWF
	\cite{li2013transmit},  which is computationally efficient to implement, is one of them. 
	Despite its effectiveness in many cases, AOWF experiences problems when  $n_e$ 
	is greater 
	than  $n_t$, for example, which is caused by a failure in finding an optimal Lagrange multiplier. 
	We modify this issue in this paper, as we will see later in Section~\ref{sec_V_result}. 
	The price is a higher time consumption in the modified approach.

	In the next section, { we apply a rotation-based model} for  the 
	covariance matrix $ 
	\mathbf{Q} $ which is a generalization of the solution in 
	\cite{li2013transmit, vaezi2017journal}, from 
	$n_t=2$ to any arbitrary $n_t$. This model is then used to find transmit signaling that can be used to achieve secrecy capacity of the MIMOME channel regardless of the number of antennas at different nodes.


	\section{A Rotation Modeling of the Problem}\label{sec_rotBasic}
	In the following subsections, we further model matrix $\mathbf{V}$ using the rotation matrix related method by reviewing $n_t=2$  \cite{vaezi2017isit} first and then generalizing it to an arbitrary $n_t$ with proof.
	
	The covariance matrix $\mathbf{Q}$ can be eigendecomposed as
	\begin{align}\label{eq_eig1}
	\mathbf{Q}= \mathbf{V} \mathbf{\Lambda} \mathbf{V}^T,
	\end{align}
	in which $\mathbf{\Lambda}$ is a 
	diagonal matrix and its diagonal elements  
	are the eigenvalues of $ \mathbf{Q} $, which are real and non-negative, i.e.,
	\begin{align}\label{eq_condiPSD}
	\mathbf{\Lambda}={\rm diag}(\lambda_{i}),\;\lambda_{i} \geq 
	0,\;i=1,2,\hdots,n_t,
	\end{align}
	and the total power constraint 
	${\rm tr}(\mathbf{Q})\leq P_t$ will be equivalent to
	\begin{align}\label{eqcondiPt}
	\sum_{i=1}^{n_t}\lambda_i\leq P_t. 
	\end{align}
	Also, $ \mathbf{V} \in \mathbb{R}^{n_t \times 
		n_t}$ is the matrix composed of $n_t$ corresponding eigenvectors of $ 
	\mathbf{Q} $ in $ \mathbb{R}^{n_t} $ vector space. Since $\mathbf{Q}$ is symmetric, the matrix 
	$\mathbf{V}$ is  \textit{orthonormal}.
	
	With this, an immediate implication of the above decomposition is that linear precoding can achieve  the capacity of the MIMOME channel.

	\subsection{Rotation Modeling for $n_t=2$  
	}\label{sec_rotBasic_sub_nt2}
	The capacity region of  MIMOME channel with two transmit antennas has been 
	established in \cite{vaezi2017journal, vaezi2017isit}, using a rotation matrix in 
	two-dimensional (2D) space. In this case, the eigenvalue matrix $\mathbf{\Lambda}$ can 
	be written as
	\begin{align}\label{eq_eigValD2}
	\mathbf{\Lambda}=\left[
	\begin{matrix}
	\lambda_{1}	&	0\\
	0	&	\lambda_{2}
	\end{matrix}
	\right],
	\end{align}
	and, without loss of generality, the 
	eigenvectors  can be written as the following orthonormal (rotation) matrix
	\begin{align}\label{eq_rotD2}
	\mathbf{V}=\mathbf{V}_{12}\triangleq\left[
	\begin{matrix}
	\cos\theta_{12}	&-\sin\theta_{12}\\
	\sin\theta_{12}	&\cos\theta_{12}
	\end{matrix}
	\right].
	\end{align}
	The rotation angle $\theta_{12}$ in rotation matrix $\mathbf{V}_{12}$ 
	corresponds to the rotation from the direction of the 
	standard basis 
	(unit vector) $\mathbf{e}_1=(1,0)^T$ to the standard basis 
	$\mathbf{e}_2=(0,1)^T$ in $\mathbb{R}^2$ vector space. This rotation is 
	achieved on the plane defined by $\mathbf{e}_1$ and $\mathbf{e}_2$. Then, 
	the covariance matrix  $\mathbf{Q}$  can be built using three parameters: 
	the two non-negative eigenvalues $\lambda_{1}$  and  $\lambda_{2}$, and the 
	rotation 
	angle $\theta_{12}$. These parameters are complete to represent any arbitrary 
	$2\times 2$ covariance matrix as in \eqref{eq_eig1}. Then the original 
	optimization problem \eqref{eq_optVer1} can be 
	equivalently converted to
	\begin{subequations}
		\begin{align}
		\quad\mathcal{C}_s = &\max 
		\limits_{\lambda_{1},\lambda_{2},\theta_{12}}\frac{1}{2}\log_2\frac{|\mathbf{I}_{n_t}
			+\mathbf{H}^T\mathbf{H}\mathbf{V}\mathbf{\Lambda}\mathbf{V}^T|}{|\mathbf{I}_{n_t}
			+\mathbf{G}^T\mathbf{G}\mathbf{V}\mathbf{\Lambda}\mathbf{V}^T|},
		\label{eq_optD2_a}\\
		&\;\;\;\operatorname{s.t.} \quad\lambda_{1}\geq0,\; \lambda_{2}\geq0,\; 
		\lambda_{1}+\lambda_{2} \leq P_t.\label{eq_optD2_b}
		\end{align}\end{subequations}
	In light of this modeling, an analytical solution for optimal precoding matrix 
	and power allocation scheme are obtained in 
	\cite{vaezi2017isit} by finding $ \theta_{12} $, $ \lambda_{1} $ and $ \lambda_{2} $. In 
	the next subsection, we extend this method to the cases for 
	an arbitrary $n_t$.

	\subsection{Generalization  to an Arbitrary  
	$n_t$}\label{sec_rotBasic_sub_nt}
	To generalize the rotation modeling method to an arbitrary $n_t\times n_t$ covariance 
	matrix, $\mathbf{\Lambda}\in \mathbb{R}^{n_t\times n_t}$ is a diagonal matrix with 
	non-negative 
	elements $\mathbf{\Lambda}(i,i)\triangleq\lambda_{i}$.
	$\mathbf{V}$ is a 
	rotation matrix in 
	$\mathbb{R}^{n_t\times n_t}$ vector space which can be obtained by
	\begin{align}\label{eq_Vnbyn_}
	\mathbf{V}=\prod_{i=1}^{n_t-1}\prod_{j=i+1}^{n_t} \mathbf{V}_{ij},
	\end{align}  
	in which the basic rotation matrix $ \mathbf{V}_{ij} $ is the Givens matrix 
	\cite{matrix} defined as
	\begin{align}\label{eq_VnD} 
	\mathbf{V}_{ij}=\left[
	\begin{matrix}
	1		&\cdots	&		&			&		&\cdots	&0\\
	\vdots	&\ddots	&		&			&		& &\vdots\\
	&		&v_{ii}	&	\cdots	&v_{ij}	&		&\\
	&		&\vdots	&	\ddots	&\vdots	&		&\\
	&		&v_{ji}	&	\cdots	&v_{jj}	&		&\\
	\vdots		&	&		&			&		&\ddots	&\vdots\\
	0		&\cdots	&		&			&	&\cdots	&1
	\end{matrix}
	\right],
	\end{align}
	and
	\begin{align}\label{eq_VnDsub}
	\left[
	\begin{matrix}
	v_{ii}	&v_{ij}\\
	v_{ji}	&v_{jj}
	\end{matrix}
	\right]
	=\left[
	\begin{matrix}
	\cos\theta_{ij}	&-\sin \theta_{ij}\\
	\sin\theta_{ij}	&\cos \theta_{ij}
	\end{matrix}
	\right].
	\end{align}
	$\mathbf{V}_{ij}$ represents a rotation from the $i$th standard basis 
	to the  
	$j$th standard basis in $ \mathbb{R}^{n_t} $ vector space with a 
	rotation angle  
	$\theta_{ij}$. That is, we show that an arbitrary 
	orthogonal matrix $\mathbf{V}$ can be represented by \eqref{eq_Vnbyn_}. 
	Further, an arbitrary covariance matrix $\mathbf{Q}\in \mathbb{R}^{n_t\times 
		n_t}$ can be represented by $n_t$ non-negative eigenvalues and 
	$\frac{1}{2}n_t(n_t-1)$ rotation angles.  
	
	It should be 
	noted that the order of multiplication in \eqref{eq_Vnbyn_} is not unique, and a different	order will lead to different rotation angles $\theta_{ij}$. In this paper, 
	without loss of generality, we use the order definition in \eqref{eq_Vnbyn_}.
	
	\begin{lemma}\label{lemma_Qn}
		To reach the secrecy capacity of the MIMOME with 
		$n_t\geq 2$, it is sufficient to use a PSD diagonal 
		$\mathbf{\Lambda}$ and the rotation matrix $ \mathbf{V} $ is given in 
		\eqref{eq_Vnbyn_} to generate the input covariance matrix 
		$\mathbf{Q}=\mathbf{V}\mathbf{\Lambda}\mathbf{V}^T$.
	\end{lemma}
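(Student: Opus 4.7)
The plan is to show that every symmetric positive semi-definite matrix $\mathbf{Q}\in\mathbb{R}^{n_t\times n_t}$ admits a representation $\mathbf{Q}=\mathbf{V}\mathbf{\Lambda}\mathbf{V}^T$ in which $\mathbf{\Lambda}$ is diagonal with non-negative entries and $\mathbf{V}$ is precisely the ordered product of Givens matrices prescribed in \eqref{eq_Vnbyn_}. Since the feasible set of (P1) consists of exactly such $\mathbf{Q}$ (subject to the trace constraint, which only restricts the $\lambda_i$), sufficiency of this parameterization for achieving the secrecy capacity will follow immediately.

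First, I would invoke the spectral theorem: any symmetric $\mathbf{Q}$ factors as $\mathbf{Q}=\mathbf{U}\mathbf{\Lambda}\mathbf{U}^T$ with $\mathbf{U}$ orthogonal and $\mathbf{\Lambda}$ diagonal, and when $\mathbf{Q}\succeq\mathbf{0}$ the diagonal entries of $\mathbf{\Lambda}$ are non-negative, as required by \eqref{eq_condiPSD}. Second, I would reduce to the case $\det(\mathbf{U})=+1$: if $\det(\mathbf{U})=-1$, replace $\mathbf{U}$ by $\mathbf{U}\mathbf{D}$ with $\mathbf{D}=\mathrm{diag}(1,\ldots,1,-1)$; since $\mathbf{D}\mathbf{\Lambda}\mathbf{D}=\mathbf{\Lambda}$ for diagonal $\mathbf{\Lambda}$, the product $\mathbf{U}\mathbf{D}\mathbf{\Lambda}(\mathbf{U}\mathbf{D})^T=\mathbf{Q}$ is unchanged while the determinant flips to $+1$. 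Thus it suffices to parameterize $SO(n_t)$.

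The core step is then to show that every $\mathbf{V}\in SO(n_t)$ equals the specific product $\prod_{i=1}^{n_t-1}\prod_{j=i+1}^{n_t}\mathbf{V}_{ij}$. I would argue this constructively by mimicking Givens-based QR reduction: sweeping through pairs $(i,j)$ in the reverse of the prescribed order, I apply each $\mathbf{V}_{ij}^T$ on the left of $\mathbf{V}$ with the angle $\theta_{ij}$ chosen to zero a designated off-diagonal entry. After all $\tfrac{1}{2}n_t(n_t-1)$ rotations, the product is upper triangular; being also orthogonal, it must be a signed diagonal matrix $\mathbf{D}'$, and because every Givens rotation and $\mathbf{V}$ itself have determinant $+1$, so does $\mathbf{D}'$, whence $\mathbf{D}'=\mathbf{I}_{n_t}$ after absorbing any remaining sign into the last rotation angles. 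Inverting the sequence and using $\mathbf{V}_{ij}^{-1}=\mathbf{V}_{ij}(-\theta_{ij})$ delivers the desired factorization. The parameter count $\tfrac{1}{2}n_t(n_t-1)$ equals $\dim SO(n_t)$, consistent with completeness of the parameterization.

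The main obstacle is the bookkeeping in this last step: one has to verify that the zeroing sweep can indeed be carried out in an order compatible with \eqref{eq_Vnbyn_} without a subsequent rotation re-introducing entries that have already been annihilated, and that the sign of $\mathbf{D}'$ can be folded back into the Givens angles rather than requiring an extra reflection. Both points rest on the fact that a Givens rotation $\mathbf{V}_{ij}$ acts nontrivially only on rows/columns $i$ and $j$, so rotations applied later in the sweep leave earlier-zeroed entries untouched provided the ordering is chosen consistently, as is the case for the lexicographic order in \eqref{eq_Vnbyn_}. I would conclude by noting, as the surrounding text already does, that the decomposition is not unique—different orderings yield different $\theta_{ij}$—but any fixed ordering, and in particular the one adopted here, is sufficient to generate every orthogonal $\mathbf{V}$ and hence every feasible $\mathbf{Q}$.
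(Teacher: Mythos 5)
Your proof is correct and follows essentially the same route as the paper: spectral decomposition of $\mathbf{Q}$, a Givens/QR-type sweep choosing each $\theta_{ij}$ to annihilate one subdiagonal entry so that the ordered product of $\mathbf{V}_{ij}^T$ reduces the orthogonal factor to the identity, and a separate fix for the improper-rotation ($\det=-1$) case. The only immaterial difference is that you restore $\det=+1$ by negating the last eigenvector column while the paper swaps two eigenvector columns together with their eigenvalues; both leave $\mathbf{Q}$ unchanged, and your explicit handling of the residual signed-diagonal matrix via determinant parity is, if anything, slightly more careful than the paper's argument.
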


	\begin{proof}
		First, we prove that 
		$\mathbf{Q} = \mathbf{V}\mathbf{\Lambda}\mathbf{V}^T$ is a covariance matrix. It is straightforward to check that  $\mathbf{V}$ in \eqref{eq_Vnbyn_} is an 
		orthonormal matrix, i.e., $\mathbf{V}\mathbf{V}^T=\mathbf{I}$.
		Since diagonal elements of $\mathbf{\Lambda}$ are non-negative, $\mathbf{Q}$ is symmetric and PSD, i.e., a 
		covariance matrix.
		
		Next, we prove that an arbitrary covariance matrix $ \mathbf{Q} $ can be written as 
		$\mathbf{Q}=\mathbf{V}\mathbf{\Lambda}\mathbf{V}^T$ while $\mathbf{V}$ is defined as \eqref{eq_Vnbyn_}. It  suffices to  find  $\theta_{ij} $ such that \eqref{eq_theo1_proof1} holds for a given orthonormal  $ \mathbf{V} $
		\begin{align}\label{eq_theo1_proof1}
		\left(\prod_{i=1}^{n_t-1}\prod_{j=i+1}^{n_t} 
		\mathbf{V}_{ij}\right)^T\mathbf{V}=\mathbf{I}.
		\end{align}
		This process is carried out in  Algorithm~\ref{alg_solveAngle}. In fact, 
		\eqref{eq_theo1_proof1} applies a series of Givens rotation on $\mathbf{V}$. 
		They keep the relative orthogonality and rotates $\mathbf{V}$ to identity 
		matrix $\mathbf{I}$.
		
		To better appreciate this, we note that if we expand the product of matrices on the left side of  $\mathbf{V} $ in \eqref{eq_theo1_proof1},
		we see that $\mathbf{V} $ is initially multiplied with $\mathbf{V}_{12}^T $. Then, the corresponding 
		rotation angle $ \theta_{12} $ can be chosen  to set the entry $(2,1)$ of $ 
		\left(\mathbf{V}_{12}\mathbf{V}\right)$ to zero.  Next, $ 
		\mathbf{V}_{13}^T$ is multiplied to the new $\mathbf{V}$ and $ \theta_{13} $ can be chosen  to set the entry $(3,1)$ to zero. 
		This process continues until the last element  under the main diagonal of $\mathbf{V}$, i.e., the entry $(n_t,n_t-1)$,   becomes  zero. We note that, since $ \mathbf{V} $ is orthonormal, the upper triangle also will be zero throughout this process. That is, 
		the left side of \eqref{eq_theo1_proof1} becomes an identity matrix. This proves
		Lemma~\ref{lemma_Qn}.  The rotation 
		angles $\theta_{ij} $ can be obtained by  Algorithm~\ref{alg_solveAngle} which is a 
		generalization of Algorithm 5.1.3 of \cite{matrix} for vectors in $\mathbb{R}^{n_t}$, where ${\rm atan2}(\cdot,\cdot)$ is the 
		four-quadrant inverse tangent  denoted bt \texttt{atan2} in MATLAB.
		
		In certain cases, the eigenvalue decomposition of $\mathbf{Q}$ may give 
		an improper rotation matrix \cite{improper} whose determinate is $-1$ and 
		cannot be converted to an identical matrix by rotation. As a result, 
		$\mathbf{V}_{ij}$ for \eqref{eq_theo1_proof1} does not exist. To deal 
		with this 
		issue, the improper 
		rotation matrix can be converted to a proper rotation matrix by exchanging 
		arbitrary two eigenvectors of $ \mathbf{V} $ and the corresponding 
		eigenvalues of $\mathbf{\Lambda}$.   As an example, we can define 
		$\mathbf{V}^\prime=\mathbf{V}\mathbf{I}^\prime$ and 
		$\mathbf{\Lambda}^\prime=\mathbf{I}^{\prime 
			T}\mathbf{\Lambda}\mathbf{I}^{\prime}$, where
		\begin{align}\label{eq_ReOrder_n}
		\mathbf{I}^\prime=\left[
		\begin{matrix}
		1	& \cdots	&	0	&0&0\\
		\vdots& \ddots& \vdots& \vdots& \vdots\\
		0	& \cdots&	1	&0 & 0\\
		0	& \cdots&	0	&0 & 1\\
		0	& \cdots&	0&1	&0 
		\end{matrix}
		\right]\in\mathbb{R}^{n_t\times n_t}.
		\end{align}
		This specific $\mathbf{I}$  exchanges the last two eigenvectors and eigenvalues of $\mathbf{Q}$. Since 
		$\mathbf{I}^\prime\mathbf{I}^{\prime T}=\mathbf{I}$, it is clear that
		\begin{align}\label{eq:repQ3_n}
		\mathbf{V}^\prime\mathbf{\Lambda}^\prime\mathbf{V}^{\prime T}
		=\left(\mathbf{V}\mathbf{I}^\prime\right)
		\left(\mathbf{I}^{\prime T}\mathbf{\Lambda}\mathbf{I}^{\prime}\right)
		\left(\mathbf{V}\mathbf{I}^\prime\right)^T
		=\mathbf{V}\mathbf{\Lambda}\mathbf{V}^T=\mathbf{Q}.
		\end{align}
		In such a case, $\mathbf{V}^\prime$ is a rotation matrix and the 
		$\mathbf{Q}$ will 
		remain the same. This completes the proof of  Lemma~\ref{lemma_Qn}.
	\end{proof}
	
	\begin{algorithm}[ht]
		\caption{Rotation Angles Solution}\label{alg_solveAngle}
		\begin{algorithmic}[1]
			\STATE	Initialize $[\mathbf{V}, \mathbf{\Lambda}]={\rm 
				eig}(\mathbf{Q})$, $i=1$;
			\IF		{$\det(\mathbf{V})$ = -1}
			\STATE	Exchange first two columns of $\mathbf{V}$;
			\STATE	Exchange first two values on diagonal of $\mathbf{\Lambda}$;
			\ENDIF
			\WHILE	{$i\leq(n_t-1)$}
			\STATE	$j=i+1$;
			\WHILE	{$j\leq n_t$}
			\STATE	$\theta_{ij}=-{\rm atan2}(-\mathbf{V}(j,i),\mathbf{V}(i,i));$
			\STATE	$\mathbf{V}_{\rm rot} = \mathbf{I}_{n_t};$
			\STATE	$\mathbf{V}_{\rm rot}(i,i)=\mathbf{V}_{\rm 
				rot}(j,j)=\cos\theta_{ij}$;
			\STATE	$\mathbf{V}_{\rm rot}(j,i)=-\mathbf{V}_{\rm 
				rot}(i,j)=\sin\theta_{ij}$;
			\STATE	$\mathbf{V}=\mathbf{V}_{\rm rot}\mathbf{V}$;  
			\STATE	$j=j+1$;
			\ENDWHILE
			\STATE	$i=i+1$;
			\ENDWHILE
			\STATE	Output $\theta_{ij}$, $\forall 1\leq i<j\leq n_t$.
		\end{algorithmic}
	\end{algorithm}
	
	{Lemma~\ref{lemma_Qn} shows that any covariance matrix can be 
	formed using a rotation matrix  and a diagonal power allocation matrix. This  is  
	useful in many optimization problems, including that of the MIMOME channel in 
	\eqref{eq_optVer1}, as it removes the PSD constraint of the covariance matrix 
	(i.e., $\mathbf{Q}\succeq\mathbf{0}$). Instead, we will have a set of linear 
	constraints to make sure that the diagonal elements of the power allocation 
	matrix are non-negative and their sum is not greater than $P_t$.} 
	
	{Specifically, for any} $n_t$, the optimization problem 
	\eqref{eq_optVer1} is
	equivalently reformulated as 
	\begin{subequations} \label{eq:P2}
		\begin{align}
		\textmd{(P2)}\quad\mathcal{C}_s = &\max 
		\limits_{{\bm \lambda},{\bm\theta}}\frac{1}{2}\log_2\frac{|\mathbf{I}_{n_t}
			+\mathbf{H}^T\mathbf{H}\mathbf{V}\mathbf{\Lambda}\mathbf{V}^T|}{|\mathbf{I}_{n_t}
			+\mathbf{G}^T\mathbf{G}\mathbf{V}\mathbf{\Lambda}\mathbf{V}^T|},	
		\label{eq_optD3_1}\\
		&\ \operatorname{s.t.} \quad\sum_{i=1}^{n_t}\lambda_{i} \leq P_t, \label{eq_optD3_2}\\
		&\qquad\quad\lambda_{i}\geq0, i\in\{1,\hdots,n_t\},\label{eq_optD3_3}
		\end{align}\end{subequations}
	{  in which we have defined}	\begin{subequations} 
	\label{eq:2vectors}
		\begin{align}
		&{\bm \lambda}\triangleq\{\lambda_{i}\},\;1\leq i \leq n_t,\label{eq_lamVec}\\
		&{\bm \theta}\triangleq\{\theta_{ij}\},\;1\leq i<j\leq n_t\label{eq_theVec}.
		\end{align}
	\end{subequations}
	{ as the compact form of the parameters 
		of \eqref{eq_Vnbyn_}-\eqref{eq_VnDsub}}\footnote{{It is worth 	
		mentioning that Givens rotation method applied 
		in many problems can finally 
			boil down to finding the optimal rotation parameters. However, the application of the rotation method is 
			different. In \cite{de2010optimized}, only the optimal unitary 
			beamforming without power allocation is evaluated because it constrained the precoding matrices to 
			be unitary for limited feedback. On the other hand, the optimization approach is in different manners. In their iterative 
			algorithm, the rotation angles of each covariance matrix are obtained one by one. 
			In our paper, these parameters can be updated vector-wised once.
	}}.
	
	Different from \eqref{eq_optVer1}, this new representation replaces the constraint that $\mathbf{Q}$ is symmetric and PSD 
	by linear constraints on $\bm\lambda$ only while rotation angles can take any number, i.e.,  $\bm\theta\in\mathbb{R}$. Then, numerical methods can be applied to optimize the parameters $\bm\theta$ and
	$\bm\lambda$ to obtain the optimal 
	secrecy rate. For $\mathbf{Q}$ with dimension $n_t$, 
	the required number of eigenvalues is $n_t$, whilst this number is
	$\frac{1}{2}n_t(n_t-1)$ for the rotation angle. The total number of parameters is 
	$\frac{1}{2}n_t(n_t+1)$, which is equal to the number of the elements of the 	upper triangular of $\mathbf{Q}$. Theoretically, the 
	rotation modeling method can provide a systematic approach to traverse 
	$\mathbf{Q}$ by traversing $\bm\lambda$ and $\bm\theta$ in finite regions. 
	The feasible region  of $\bm\lambda$ is given in 
	\eqref{eq_optD3_2}-\eqref{eq_optD3_3} and for each rotation angel the 
	regions can be $[0,2\pi)$\footnote{It can be proved that in some cases the 
		optimal $\theta_{ij}$ is in $[0,\pi) $. See, for example, the case for $n_t=2$ 
		\cite{vaezi2017journal}.}.
	{	The rotation method can be applied to many other problems, 
	some listed in Section~\ref{sec_intro_sub_rWork}.
		%

		{ As mentioned earlier, for $n_t = 2$ a 
			closed-form solution for  (P2) is known in \cite{vaezi2017journal}. However, 
			finding closed-form solutions of $\bm \lambda $ and $ \bm\theta $ is 
			challenging  for  $n_t \ge 3$. Next, we introduce a structure to solve this problem effectively. }
		


		%
		{
			\section{Solving the New Optimization Problem}\label{sec_alg}
			In this section, we develop a novel method to solve (P2).
		Specifically, we convert (P2) to an unconstrained problem and resort 
		to  	BFGS \cite{luenberger1984linear}  to solve it.	We should highlight 
		that various iterative optimization methods, such as gradient descent, 
		Newton's method,   and {quasi-Newton algorithms with 
		interior-point method (that is, MATLAB convex optimization tool {\tt 
		fmincon})}, may be used to 
		optimize  (P2). Among them,
		Newton's method is the fastest, but it requires the local 
		Hessian matrix and its inverse, which  may not exist at some points.  
		Quasi-Newton methods keep the 
		convergence advantage of Newton's method without requiring the local 
		Hessian matrix and its inverse.
			 BFGS algorithm is an outstanding
		variety of quasi-Newton methods that can 
		converge faster for non-convex problems \cite{byrd1989tool, 
			nocedal2006numerical}.
			The proposed method is called rotation-BFGS.  The block diagram of 
			this  method is given in Fig.~\ref{fig_obj_sys} in which the inputs are 
			channel matrices $\mathbf{H}$ and $\mathbf{G}$ and  average power 
			$P_t$ and the outputs are the parameters of the rotation model (i.e., 
			${\bm\lambda}^{\ast}$ and 
			${\bm\theta}^{\ast}$) as well as the corresponding secrecy rate $R^{\ast}$. As shown in Fig.~\ref{fig_obj_sys}, there are four blocks in the BFGS-based optimizer: GSVD for initialization, BFGS for optimization, eigenvalues rectifier, and  objective function \eqref{eq_optD3_1} evaluator.  In the following subsections, we introduce the functionality of each block and  the optimization algorithm.

			\begin{figure}[h]
				\centering
				\includegraphics[width=0.48\textwidth]{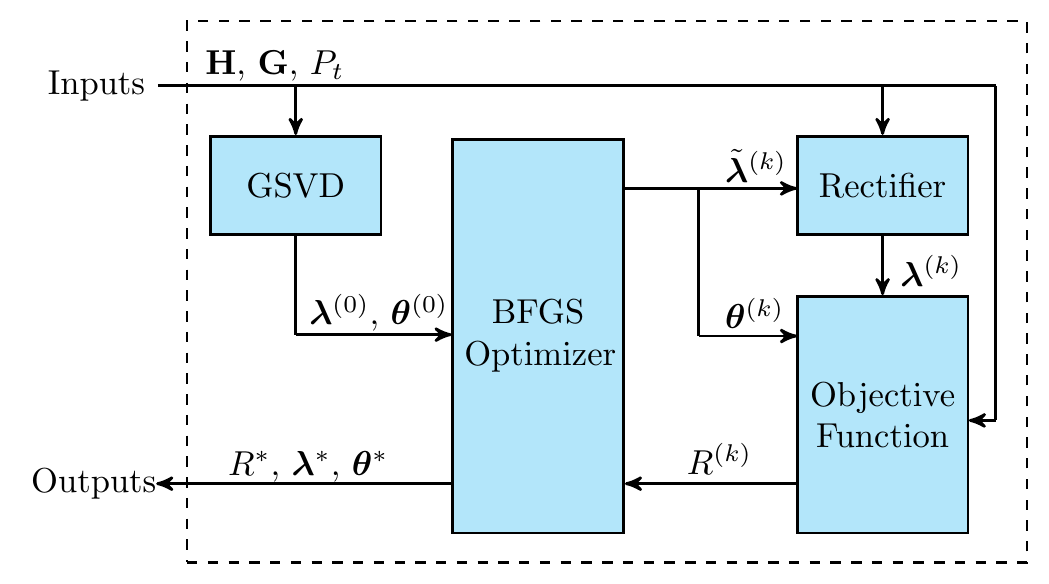}
				\caption{The system design of  rotation-BFGS method.}
				\label{fig_obj_sys}
			\end{figure}
			
			\subsection{Functionality of Each Block}
			\subsubsection{Initialization using GSVD}\label{sec_alg_sub_gsvd}
			While initial values of $\bm\lambda$ and $\bm\theta$  can be chosen 
			randomly, 	efficient initial values   can  save time by reducing the 
			number of iterations. For this reason, and knowing that GSVD-based 
			beamforming is a  good solution for this problem, we use GSVD to 
			find the  
			initial values ($\bm\lambda^{\rm(0)}$ and $\bm\theta^{\rm(0)}$)
			for our rotation-BFGS algorithm. 
			The solution given by GSVD-based beamforming provides a precoding matrix $ \mathbf{E} $ which 
			satisfies:
			\begin{subequations}
				\begin{align}
				\mathbf{H}\mathbf{E}=\mathbf{\Psi}_r\mathbf{C},\quad\label{eq_gsvda}\\
				\mathbf{G}\mathbf{E}=\mathbf{\Psi}_e\mathbf{D},\quad\label{eq_gsvdb}\\
				\mathbf{C}^T\mathbf{C}+\mathbf{D}^T\mathbf{D}=\mathbf{I},\label{eq_gsvdc}
				\end{align}	
			\end{subequations}
			where
			$\mathbf{E}\in\mathbb{R}^{n_t\times q} $,  $q=\min(n_t,n_r+n_e)$, 
			$\mathbf{C}^T\mathbf{C}={\rm diag}(c_i)$ and $\mathbf{D}^T\mathbf{D}={\rm 
				diag}(d_i)$, $i\in\{1,\hdots, q\}$, are 
			diagonal matrices, and  $\mathbf{\Psi}_r 
			\in\mathbb{R}^{n_r\times n_r}$ and $\mathbf{\Psi}_e
			\in\mathbb{R}^{n_e\times n_e} $ are orthonormal 
			matrices. Besides, the 
			power allocation matrix 
			$\mathbf{P}={\rm diag}(p_i)$ is determined by 
			\cite{fakoorian2012optimal}
			\begin{align}\label{eq_gsvdP}
			p_i=\begin{cases} 
			\max(0, \frac{2(c_i-d_i)/(\mu 
e_i)-2}{1+\sqrt{1-4c_id_i+4(c_i-d_i)c_id_i/(\mu 
					e_i)} }      ),	
			&\textmd{if } c_i>d_i,\\
			0, &\textmd{otherwise},
			\end{cases}
			\end{align}	
			in which $p_i$ and $e_i$ are the $i$th diagonal element of $ 
			\mathbf{P} 
			$ 
			and $ \mathbf{E}^T\mathbf{E} $ respectively, 
			and $\mu$ is the Lagrange multiplier to ensure
			\begin{align}\label{eq_gsvdLa}
			{\rm tr}(\mathbf{E}\mathbf{P}\mathbf{E}^T)=P_t.
			\end{align}
			GSVD-based  $\mathbf{Q}$ is then $ 
			\mathbf{E}\mathbf{P}\mathbf{E}^T $. Thus, we can determine  $\mathbf{V}^{(0)}$ 
			and $\mathbf{\Lambda}^{(0)}$ from eigenvalue decomposition of $\mathbf{E}\mathbf{P}\mathbf{E}^{T}$, i.e., from
			\begin{align}\label{eq_gsvd2Q}
			\mathbf{Q}^{(0)}=\mathbf{E}\mathbf{P}\mathbf{E}^{T}
			=\mathbf{V}^{(0)}\mathbf{\Lambda}^{(0)}\mathbf{V}^{(0)T}.
			\end{align} Then, ${\bm\lambda}^{(0)}= \rm diag (	\mathbf{\Lambda}^{(0)})$  and $ {\bm\theta}^{(0)} $ can be obtained from  $ \mathbf{V}^{(0)} $
			using  Algorithm~\ref{alg_solveAngle}. \\
			
			\subsubsection{BFGS Optimizer}\label{sec_alg_sub_bfgs}
			
			Different iterative optimization methods, such as gradient descent, 
			Newton's method,   and quasi-Newton algorithms may be used to 
			optimize  (P2). Among the above methods,
			Newton's method is the fastest method,  but it requires a local 
			Hessian matrix and its inverse, which may not exist or hard to obtain.  
			Quasi-Newton methods keep the 
			convergence advantage of Newton's method  without requiring the local 
			Hessian matrix and its inverse.
			An outstanding
			variety of quasi-Newton methods	is the BFGS algorithm \cite{luenberger1984linear}. We give a brief introduction to this method in the following.

			Given an unconstrained optimization problem
			\begin{align}\label{eq_bfgs_main_}
			\arg\min\limits_{\mathbf{x}} f(\mathbf{x}),  
			\end{align}
			with argument vector $\mathbf{x}$ and objective function $f(\mathbf{x})$,  
			BFGS algorithm updates the vector $\mathbf{x}$ iteratively according to 
			\begin{align}\label{eq_bfgs_x_iter_}
			\mathbf{x}^{(k+1)}=\mathbf{x}^{(k)} - \alpha^{(k)} \mathbf{M}^{(k)} \mathbf{g}^{(k)},\;k\geq0,
			\end{align} 
			in which  
			\begin{itemize}
				\item  $\alpha^{(k)}$ is the step size to minimize $f(\mathbf{x}^{(k+1)})$, which can be obtained via a line search.
				\item $\mathbf{g}^{(k)}$ is the gradient of $f(\cdot)$ at $\mathbf{x}^{(k)}$.
				\item  $\mathbf{M}^{(k)}$ is an approximation of 
				the inverse of the Hessian matrix.
			\end{itemize}
			The matrix $\mathbf{M}$ is initialized by a unity matrix, i.e.,  $\mathbf{M}^{(0)}=\mathbf{I}$, and is then updated as \cite{luenberger1984linear}  
			\begin{align}\label{eq_bfgs_iter_H_}
			\mathbf{M}^{(k+1)}=&\left(\mathbf{I}-\frac{   \bm{\delta}_\mathbf{x}^{(k)}\bm{\delta}_\mathbf{g}^{(k)T}     }
			{\bm{\delta}_\mathbf{g}^{(k)T}\bm{\delta}_\mathbf{x}^{(k)}}\right)
			\mathbf{M}^{(k)}
			\left(\mathbf{I}-\frac{\bm{\delta}_\mathbf{g}^{(k)}\bm{\delta}_\mathbf{x}^{(k)T}}
			{\bm{\delta}_\mathbf{g}^{(k)T}\bm{\delta}_\mathbf{x}^{(k)}}\right) \notag\\
			&+\frac{\bm{\delta}_\mathbf{x}^{(k)}\bm{\delta}_\mathbf{x}^{(k)T}}
			{\bm{\delta}_\mathbf{g}^{(k)T}\bm{\delta}_\mathbf{x}^{(k)}},
			\end{align}
			in which
			\begin{subequations}\begin{align}
				&\bm{\delta}_\mathbf{x}^{(k)}\triangleq\mathbf{x}^{(k+1)}-\mathbf{x}^{(k)},\label{eq_bfgs_3a_}\\
				&\bm{\delta}_\mathbf{g}^{(k)}\triangleq\mathbf{g}^{(k+1)}-\mathbf{g}^{(k)}\label{eq_bfgs_3b_},
				\end{align}\end{subequations}
			respectively,  represent the difference between the arguments and the gradients in two successive iterations. The gradient can be obtained analytically or numerically.

			Due to its fast convergence speed 
			and self-correcting property \cite{byrd1989tool}, BFGS is widely used in 
			unconstrained optimization problems. However, the problem in this 
			paper is a constrained optimization problem due to 
			\eqref{eq_optD3_2} and \eqref{eq_optD3_3}. To overcome this limitation,  we relax  the constraints of  (P2) when using the BFGS optimizer, but  we add a new block called a rectifier, as shown in  Fig.~\ref{fig_obj_sys}. The rectifying block is used to ensure 
			that the constraint on $\lambda_i$s are satisfied as elaborated on in what follows.  
			
			\subsubsection{Rectifying Eigenvalues}\label{sec_alg_sub_pen}
			To use BFGS,  the optimization problem should be unconstrained.
		However, the eigenvalues  in (P2)  need to be constrained. To overcome this issue, we rectify the eigenvalues  to ensure that the constraints in 
			\eqref{eq_optD3_2}-\eqref{eq_optD3_3} are satisfied. 
			
			Before talking about the rectification process, we highlight that we 
			only 
			optimize the first $n_t-1$ eigenvalues because the last 
			eigenvalue will be obtained by $\sum_{i=1}^{n_t}\lambda_i=P_t$.	
			That is,  we  use  {$P_t-\sum_{i=1}^{n_t-1}{\lambda}_i$} 
			as the value 
			of the last eigenvalue ($\lambda_{n_t}$) since in the MIMOME channel it is 
			known that optimal solution uses the total power 
			\cite{khisti2010secure,loyka2016optimal,vaezi2017journal}. 
			
			Suppose $\tilde{\bm \lambda}\in\mathbb{R}^{n_t-1}$ is the vector of 
			first $n_t-1$ unconstrained eigenvalues 
			obtained from the BFGS algorithm. 
			We obtain    $ \bm 
			\lambda\in\mathbb{R}^{n_t}$  (the rectified eigenvalue vector) from 
			\begin{align}\label{eq_rect_func}
			{\bm\lambda}={\rm r}(\tilde{\bm\lambda}, P_t), 
			\end{align}
			where the rectifying function $\rm r(\cdot,\cdot)$ is defined by the 
			following successive processes:
			\begin{subequations}
				\begin{align}
				&{\bm\lambda^+=[\tilde{\bm \lambda}]^+,} 
				\label{eq_lam_p1}\\
				&{\bar{\bm\lambda}}=
				\left\{
				\begin{array}{ll}
				{\bm\lambda^+}\cdot \frac{P_t}{\sum_{i=1}^{n_t-1}{\lambda^+_i}}, &\sum_{i=1}^{n_t-1}{\lambda^+_i} > P_t,\\
				{\bm\lambda^+}, &{\rm otherwise,}
				\end{array}
				\right.\label{eq_lam_p2}\\
				&{\bm\lambda}=\left[{\bar{\bm\lambda}},P_t-\sum_{i=1}^{n_t-1} 
				\bar{\bm\lambda}\right].  \label{eq_lam_p3}
				\end{align}
			\end{subequations}
			In \eqref{eq_lam_p1}, $[\cdot]^+$ is an element-wise operation 
			which forces all negative elements of $\tilde{\bm \lambda}$ to 0  
			while  keeping non-negative values unaltered.  This will take care of 
			the constraints in  \eqref{eq_optD3_3}. But, the elements of the new 
			vector ${\bm\lambda}^+$ may not still satisfy  \eqref{eq_optD3_2}.
			In such a case, in \eqref{eq_lam_p2}, we scale the new vector such that  the sum of the eigenvalues does not exceed $P_t$. Finally, in the \eqref{eq_lam_p3}, the last eigenvalue is added.
			Thus, the problem (P2) can be solved using an unconstrained optimization method, namely, the BFGS.
			
			This process illustrated  in Fig.~\ref{fig_paramP} for 
			$n_t=3$ and $P_t=5$, as an example. In this figure, the 
			blue  
			(shaded) area denotes the feasible region of the eigenvalues, red 
			points outside the region are unconstrained eigenvalues
			$\tilde {\bm\lambda} = [\lambda_1, \lambda_2, \cdots, 
			\lambda_{n_t-1}]$ (the output of the BFGS block), and green squares 
			denote their corresponding rectified eigenvalues. 
			With the proposed rectification in \eqref{eq_lam_p1}-\eqref{eq_lam_p2},  unrestricted inputs will be forced 
			to new points on the boundary of the feasible region. 
			For example, for the point $(-1,4)$ we have $\tilde {\bm\lambda}= 
			[-1,4]$, $\bar{\bm\lambda}=\bm\lambda^+= [0,4]$ (the second case 
			in \eqref{eq_lam_p2}) and  $\bm\lambda=[\lambda_1, \lambda_2, P - 
			\lambda_1- \lambda_2]= [0,4,1]$, following 
			\eqref{eq_lam_p1}-\eqref{eq_lam_p2}. Similarly, the red point 
			$(-1,-1)$ will result in  $\bm\lambda= [0,0,5]$. 
			
			\begin{figure}[h]
				\centering
				\includegraphics[width=0.4\textwidth]{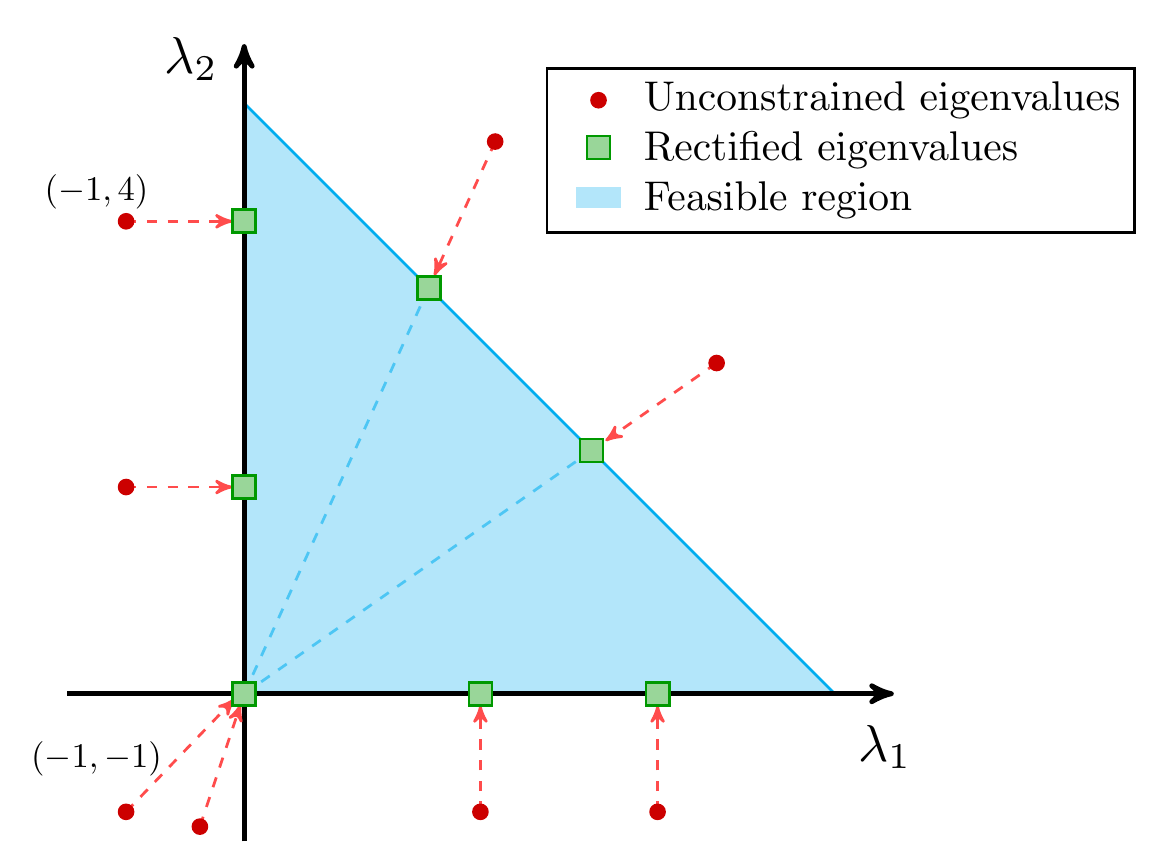}
				\caption{Illustration of eigenvalues rectification for $n_t=3$.}
				\label{fig_paramP}
			\end{figure}
			
			\subsubsection{The Objective Function}		
			In this paper, our goal is to maximize the secrecy rate $R$ defined by
			\begin{align}\label{eq_optDN_obj}
			&R(\bm{\lambda}, \bm{\theta})\triangleq
			\frac{1}{2}\log_2\frac{|\mathbf{I}_{n_t}+\mathbf{V}^T\mathbf{H}^T\mathbf{H}\mathbf{V}\mathbf{\Lambda}|}
			{|\mathbf{I}_{n_t}+\mathbf{V}^T\mathbf{G}^T\mathbf{G}\mathbf{V}\mathbf{\Lambda}|},
			\end{align}
			which is  a function of  $\bm\theta$ and  $\bm\lambda$ as shown 
			in   (P2). Noting that our problem is a maximization  rather than  a 
			minimization problem, in comparison to \eqref{eq_bfgs_main_},  we 
			define the  objective function $f(\mathbf{x}) =-R$.
			With this,  we can link BFGS and rotation method together. 
			
			\subsection{Rotation-BFGS Algorithm}\label{sec_alg_all}
			As shown in  Fig.~\ref{fig_obj_sys}, GSVD provides an initial value for the argument  $\mathbf{x}$ in the BFGS block. The rectifier  and the objective function can be considered together which require the current value of  $\mathbf{x}$  and provide corresponding achievable rate back to the BFGS. Then, BFGS will update $\mathbf{x}$ and check the termination condition. 
			
			To link the BFGS to the rotation model,  we need to define  the relation between $\mathbf{x}$, the arguments inside BFGS, and the parameters $\bm\lambda$ and $\bm\theta$.  This is given by 
			\begin{align}
			\mathbf{x}\triangleq[\tilde{\bm\lambda}, 
			{\bm\theta}],\label{eq_interF_x_}
			\end{align}
			where  $\tilde{\bm\lambda}$ is the first $n_t-1$ elements  of $\bm\lambda$ and  ${\bm\theta}$ is the same as we defined in \eqref{eq_theVec}. In this way, we reduce one argument (eigenvalue) for efficiency.
			Then,  in the $k$th iteration ($k\ge0$) of the rotation-BFGS algorithm,   the 
			relation between  argument $\mathbf{x}^{(k)}$ in the BFGS and the eigenvalues ${\bm\lambda}^{(k)}$ in precoding is
			\begin{subequations}
				\begin{align}
				&\mathbf{x}^{(k)}\triangleq[\tilde{\bm\lambda}^{(k)}, 
				{\bm\theta}^{(k)}],\label{eq_interF_x}\\
				&\bm{\lambda}^{(k)} \triangleq \rm r(\tilde{\bm\lambda}^{(k)}, P_t),
				\label{eq_interF_rla2la}
				\end{align}
			\end{subequations}
			in which $\tilde{\bm\lambda}^{(k)}$ denotes the  first $n_t-1$
			unconstrained eigenvalues and  $\bm\lambda^{(k)}$ is the  constrained (rectified)  eigenvalues which are obtained by \eqref{eq_rect_func}-\eqref{eq_lam_p3}. 
			Besides, the value of the function  is denoted as 
			\begin{align}
			&f^{(k)}\triangleq f(\mathbf{x}^{(k)})=-R({\rm r}({\tilde{\bm\lambda}}^{(k)},P_t),{\bm\theta}^{(k)}),\label{eq_interF_f}
			\end{align}
			in which ${\bm\lambda}^{(k)}$ and ${\bm\theta}^{(k)}$ can be obtained from 
			$\mathbf{x}^{(k)}$ according to \eqref{eq_interF_x}-\eqref{eq_interF_rla2la}, and $f(\cdot)$ denotes the mapping from $\mathbf{x}^{(k)}$ to $f^{(k)}$.
			Finally, the gradient vector $\mathbf{g}^{(k)}$ with respect to $\mathbf{x}^{(k)}$ is 
			obtained numerically. Specifically, the $i$th element of $\mathbf{g}$ in the $k$th iteration
			is given by
			\begin{align}
			{g}_i^{(k)}=\frac{f(\mathbf{x}^{(k)}+{\bm\epsilon}_i)-f^{(k)}}{|{\bm\epsilon}_i|},\label{eq_grad}
			\end{align} 
			where ${\bm\epsilon}_i$   has the same length as $\mathbf{x}$ and its all component are zero  except for the $i$th element which is a constant $\epsilon_1$.

			\begin{algorithm}[h]
				\caption{Rotation-BFGS Method}\label{alg_RotMwthod}
				\begin{algorithmic}[1]
					{
						\STATE	Initialize: $\epsilon_1=\epsilon_2=10^{-4}$, $k=0$, and 
						$\mathbf{M}^{(0)}=\mathbf{I}$;
						\STATE	Find $ {\mathbf{V}}^{(0)}$ and $\mathbf{\Lambda}^{(0)} $ using GSVD in \eqref{eq_gsvd2Q};
						\STATE	Find ${\bm{\lambda}}^{(0)}$ which is the diagonal of $ 
						\mathbf{\Lambda}^{(0)} $ ;
						\STATE	Find $\tilde{\bm{\lambda}}^{(0)}$ which is the first $(n_t-1)$ 
						elements in 			${\bm{\lambda}}^{(0)}$;
						\STATE	Find $\bm{\theta}^{(0)}$ using Algorithm~\ref{alg_solveAngle};
						\STATE Find $R^{(0)}$ using \eqref{eq_optDN_obj};
						\STATE Find $\mathbf{x}^{(0)}$, $\mathbf{g}^{(0)}$, and $f^{(0)}$ using 
						\eqref{eq_interF_x}-\eqref{eq_grad}; 
						\WHILE 1
						\STATE Find $\alpha^{(k)}=\alpha^{\ast}$  by line search using 
						Algorithm~\ref{alg_Lineserach};			
						\STATE Find $\mathbf{x}^{(k+1)}$ using \eqref{eq_bfgs_x_iter_};
						\STATE Find ${\bm\lambda}^{(k+1)}$ and ${\bm\theta}^{(k+1)}$ from 
						$\mathbf{x}^{(k)}$ using \eqref{eq_interF_x}-\eqref{eq_interF_rla2la};
						\STATE Find $R({\bm\lambda}^{(k+1)},{\bm\theta}^{(k+1)})$ using \eqref{eq_optDN_obj};
						\STATE Find  $\mathbf{g}^{(k+1)}$ and 
						$f^{(k+1)}$ using 
						\eqref{eq_interF_x}-\eqref{eq_grad};  
						\STATE Find  $\mathbf{M}^{(k+1)}$ using BFGS by \eqref{eq_bfgs_iter_H_}-\eqref{eq_bfgs_3b_};
						\IF {$|f^{(k+1)}-f^{(k)}|<\epsilon_2$}
						\STATE ${R}^{\ast}={-f}^{(k+1)}$, 
						${\bm\lambda}^{\ast}={\bm\lambda}^{(k+1)}$, and 
						${\bm\theta}^{\ast}={\bm\theta}^{(k+1)}$;
						\STATE Break;
						\ENDIF
						\STATE $k=k+1$;
						\ENDWHILE
						\STATE  Output: ${R}^{\ast}$, ${\bm\lambda}^{\ast}$, and ${\bm\theta}^{\ast}$.
					}
				\end{algorithmic}
			\end{algorithm}
			In our proposed rotation-BFGS method, as shown in Fig.~\ref{fig_obj_sys}, GSVD-based beamforming is  first applied  to find initial values of  ${\bm\lambda}^{(0)}$, ${\bm\theta}^{(0)}$ which are obtained by GSVD decomposition as described in 
			Section~\ref{sec_alg_sub_gsvd}. Thus
			$	\mathbf{x}^{(0)}= [\tilde{\bm\lambda}^{(0)}, {\bm\theta}^{(0)}]$. We 
			should also  
			highlight that $\tilde{\bm\lambda}^{(0)}$ contains the first $n_t-1$ elements of $\bm{\lambda}^{(0)}$.

			Algorithm~\ref{alg_RotMwthod} illustrates the details of the proposed 
			optimization process. Within this algorithm, we require a line 
			search. The  details of the line search are given in Algorithm~\ref{alg_Lineserach} in the Appendix. Algorithm~\ref{alg_RotMwthod} will 
			terminate 
			if the secrecy rate at two successive iterations are very close, i.e., when 
			their difference is smaller than a tolerance ($\epsilon_2$). 
			Algorithm~\ref{alg_RotMwthod}  guarantees a 
			global convergence based on the global convergence theorem 
			\cite[Chapter 7, pp. 196--204]{luenberger1984linear}, because the 
			optimized function \eqref{eq_interF_f} (the objective function 
			together 
			with the rectifier in Fig.~\ref{fig_obj_sys})  is continuous, and the 
			iteration in \eqref{eq_bfgs_x_iter_} can lead to a decreasing sequence, 
			i.e., $f(\mathbf{x}^{(k+1)}) \leq f(\mathbf{x}^{(k)})$.

			 	We analyze the complexity of the different algorithms here. The 
			 	computation of  matrix  multiplications and matrix inverse yields 
			 	the 
			 	complexity of $\mathcal{O}(L^3)$ where $L= \max(n_t, n_r, n_e)$. 
			 	GSVD-based precoding \cite{fakoorian2012optimal}, used  as the initial 
			 	point 
			 	generator, has the complexity of 
			 	$\mathcal{O}({L^3}+{L}\log(1/\epsilon))$ 
			 	\cite{park2015weighted} in which $\epsilon$ is the convergence 
			 	tolerance of the algorithm, while  bisection search requires 
			 	$\mathcal{O}(\log(1/\epsilon))$ iterations \cite{boyd2004convex}.   
			 	 Besides, the BFGS algorithm has the complexity 
			 	of $\mathcal{O}(n^2)$ \cite{nocedal2006numerical}, where $n$ is the size of input 
			 	variables which is the total number of optimized rotation parameters, i.e., $\frac{1}{2}n_t(n_t+1)$. 
			 	  Thus, the overall complexity of Algorithm~\ref{alg_RotMwthod} is 
			 	$\mathcal{O}({n_t^4+L^3+{L}\log(1/\epsilon)})$.
			 	AOWF \cite[Algorithm 1]{li2013transmit} yields $\mathcal{O}(\frac{L^3}{\epsilon}\log({1}/{ \epsilon}))$, in which $\mathcal{O}(\frac{1}{\epsilon})$ and 	$\mathcal{O}(\log(1/\epsilon))$ are the outer layer loop and inner bisection search, respectively.
			{It should also be mentioned that rotation-Fmin has the same 
			asymptotic complexity as 
				rotation-BFGS. 
				However, for small values of $n_t$, rotation-BFGS has about an order of magnitude smaller complexity than  rotation-Fmin, as we show in Section~\ref{sec_V_result}. }

		}


		
		\section{Numerical Result}\label{sec_V_result} 
			\begin{table}[t]
		\caption{Achievable rate (in bps/Hz) of each method 
			for 
			$n_t=3$ and $P_t=30W$.}
		\label{tab_rate_nt3}
		\centering
		{
			\begin{tabular}{|c|c|cccccc|} 
				
						\hline
				\multicolumn{8}{|c|}{Rotation-BFGS}                                            
				\\ \hline
				\multicolumn{2}{|c|}{\multirow{2}{*}{$n_t=3$}} & 
				\multicolumn{6}{c|}{$n_e$}                          \\ \cline{3-8}
				\multicolumn{2}{|c|}{}                         & 1      & 2      & 3      & 
				4      & 5      & 6      \\ \hline
				\multirow{6}{*}{$n_r$}           
				&1&2.58&    1.99&    1.14&    0.63&    0.39&    0.23\\
				&2&3.91&    2.99&    1.75&    1.09&    0.71&    0.43\\
				&3&4.84&    3.59&    2.24&    1.47&    1.01&    0.70\\
				&4&5.48&    4.16&    2.70&    1.73&    1.24&    0.89\\
				&5&5.97&    4.52&    3.01&    2.03&    1.45&    1.09\\
				&6&6.46&    4.88&    3.30&    2.35&    1.76&    1.28\\ \hline
				
				\hline\hline
				\multicolumn{8}{|c|}{{Rotation-Fmin}}                             
				\\ \hline
				
				\multicolumn{2}{|c|}{\multirow{2}{*}{$n_t=3$}} & 
				\multicolumn{6}{c|}{$n_e$}                          \\ 
				\cline{3-8}
				\multicolumn{2}{|c|}{}                         &{1}      
				&{2      }&{ 3      }&{
				4      }&{5      }&{ 6      }\\ \hline
				\multirow{6}{*}{$n_r$}           
				&{1}&{2.58}&{
				1.99}&{ 1.14}&{0.63}&{
				0.39}&{ 0.23}\\
				&{2}&{3.91}&{
				2.99}&{1.75}&{  1.09}&{  
				0.71}&{ 0.43}\\
				&{3}&{4.84}&{
				3.59}&{2.24}&{ 1.47}&{
				1.01}&{0.70}\\
				&{4}&{5.48}&{
				4.16}&{2.70}&{1.73}&{ 
				1.24}&{   0.89}\\
				&{5}&{5.97}&{
				4.52}&{ 3.01}&{  2.03}&{
				1.45}&{ 1.09}\\
				&{6}&{6.46}&{
				4.88}&{  3.30}&{2.35}&{
				1.76}&{   1.28}\\ \hline

				\hline\hline
				\multicolumn{8}{|c|}{GSVD \cite{fakoorian2012optimal}}                                                          
				\\ \hline
				\multicolumn{2}{|c|}{\multirow{2}{*}{$n_t=3$}} & 
				\multicolumn{6}{c|}{$n_e$}                    \\ \cline{3-8}
				\multicolumn{2}{|c|}{}                         & 1     & 2     & 3     & 
				4     & 5     & 6     \\ \hline
				\multirow{6}{*}{$n_r$}          
				&1&2.56&    1.82&    1.12&    0.63&    0.39&    0.23\\
				&2&2.97&    2.87 &   1.74 &   1.09&    0.70 &   0.43\\
				&3&4.41 &   3.50  &  2.23  &  1.46 &   1.01  &  0.70\\
				&4&5.17  &  4.10   & 2.69   & 1.73  &  1.24   & 0.89\\
				&5&5.76   & 4.46    &3.00    &2.02   & 1.45    &1.09\\
				&6&6.28    &4.83    &3.29    &2.35    &1.76    &1.28\\ \hline
				
				\hline\hline
				\multicolumn{8}{|c|}{AOWF \cite{li2013transmit}}                                                        
				\\ \hline
				\multicolumn{2}{|c|}{\multirow{2}{*}{$n_t=3$}} & 
				\multicolumn{6}{c|}{$n_e$}                                 \\ \cline{3-8}
				\multicolumn{2}{|c|}{}                         & 1        & 2        & 3      
				& 4       & 5       & 6       \\ \hline
				\multirow{6}{*}{$n_r$}          
				&1&2.58&    1.99 &   1.13&    0.63&    0.38&    0.23\\
				&2&3.92&    2.98 &   1.74&    1.08 &   0.69&    0.42\\
				&3&4.86&    3.58 &   2.23&    1.45 &   0.99&    0.68\\
				&4&5.49&    4.15 &   2.69&    1.70 &   1.20&    0.85\\
				&5&5.99&    4.51 &   2.98&    1.97 &   1.39&    1.02\\
				&6&6.47&    4.87 &   3.26&    2.28 &   1.67&    1.18\\ \hline
			\end{tabular}
		}
	\end{table}
	
	\begin{table}[t]
		\caption{Achievable rate (in bps/Hz) of each method 
			for 
			$n_t=4$ and $P_t=30W$.}
		\label{tab_rate_nt4p}
		\centering
		{
			\begin{tabular}{|c|c|cccccc|}
				\hline
			\multicolumn{8}{|c|}{Rotation-BFGS}                                            
			\\ \hline
			\multicolumn{2}{|c|}{\multirow{2}{*}{$n_t=4$}} & 
			\multicolumn{6}{c|}{$n_e$}                          \\ \cline{3-8}
			\multicolumn{2}{|c|}{}                         & 1      & 2      & 3      & 
			4      & 5      & 6      \\ \hline
			\multirow{6}{*}{$n_r$}  
			
			&1&3.04&    2.63&    2.06&    1.25&    0.77&    0.44\\
			&2&4.72&    4.01&    3.12&    2.02&    1.32&    0.87\\
			&3&5.91&    4.97&    3.78&    2.60&    1.72&    1.21\\
			&4&6.81&    5.68&    4.40&    3.05&    2.08&    1.55\\
			&5&7.63&    6.28&    4.82&    3.46&    2.50&    1.83\\
			&6&8.24&    6.76&    5.22&    3.78&    2.74&    2.10\\ \hline
				
				\hline\hline
				
				\multicolumn{8}{|c|}{Rotation-Fmin}

				\\ \hline
				\multicolumn{2}{|c|}{\multirow{2}{*}{$n_t=4$}} & 
				\multicolumn{6}{c|}{$n_e$}                          \\ 
				\cline{3-8}
				\multicolumn{2}{|c|}{}                         &{ 1      
				}&{2      }&{3      }&{
				4      }&{ 5      }&{6      }\\ \hline 
				\multirow{6}{*}{$n_r$}  
				
				&{1}&{3.04}&{
				2.63}&{   2.06}&{   1.25}&{
				0.77}&{  0.44}\\
				&{2}&{4.73}&{ 
				4.02}&{   3.12}&{2.02}&{ 
				1.32}&{ 0.87}\\
				&{3}&{5.91}&{
				4.98}&{3.78}&{2.60}&{
				1.72}&{ 1.21}\\
				&{4}&{6.81}&{
				5.69}&{4.40}&{ 3.05}&{  
				2.08}&{1.55}\\
				&{5}&{7.63}&{
				6.29}&{4.82}&{ 3.46}&{
				2.50}&{1.83}\\
				&{6}&{8.24}&{
				6.77}&{ 5.22}&{ 3.78}&{
				2.74}&{ 2.10}\\ \hline

				\hline\hline
				\multicolumn{8}{|c|}{GSVD \cite{fakoorian2012optimal}}                                                          
				\\ \hline
				\multicolumn{2}{|c|}{\multirow{2}{*}{$n_t=4$}} & 
				\multicolumn{6}{c|}{$n_e$}                    \\ \cline{3-8}
				\multicolumn{2}{|c|}{}                         & 1     & 2     & 3     & 
				4     & 5     & 6     \\ \hline
				\multirow{6}{*}{$n_r$}          
				&1&3.03&    2.59&    1.82&    1.22&    0.76&    0.44\\
				&2&4.27 &   3.02 &   2.97 &   1.99 &   1.31 &   0.87\\
				&3&3.87  &  4.42  &  3.67  &  2.57  &  1.72  &  1.21\\
				&4&5.80   & 5.33   & 4.31   & 3.03   & 2.08   & 1.55\\
				&5&6.91    &6.03   & 4.75    &3.44    &2.50    &1.82\\
				&6&7.70    &6.56    &5.16    &3.76    &2.73    &2.09\\ \hline
				
				\hline\hline
				\multicolumn{8}{|c|}{AOWF \cite{li2013transmit}}                                                        
				\\ \hline
				\multicolumn{2}{|c|}{\multirow{2}{*}{$n_t=4$}} & 
				\multicolumn{6}{c|}{$n_e$}                                 \\ \cline{3-8}
				\multicolumn{2}{|c|}{}                         & 1        & 2        & 3      
				& 4       & 5       & 6       \\ \hline
				\multirow{6}{*}{$n_r$}          
				&1&3.04&    2.63&    2.06&    1.24&    0.76&    0.44\\
				&2&4.74 &   4.04 &   3.12 &   2.01 &   1.31 &   0.86\\
				&3&5.94  &  4.98  &  3.77  &  2.58  &  1.70  &  1.18\\
				&4&6.82   & 5.68   & 4.39   & 3.04   & 2.06   & 1.52\\
				&5&7.63    &6.29    &4.81    &3.44    &2.46    &1.78\\
				&6&8.24    &6.75    &5.21    &3.74    &2.69    &2.03\\ \hline
			\end{tabular}
		}
	\end{table}
		In this section, extensive numerical results are provided to illustrate the 
		performance of the proposed rotation-BFGS method. {Four} 
		methods are taken into 
		account: 
		
		
		\begin{itemize}
			\item \textbf{Rotation-BFGS}: the proposed rotation-BFGS parameterization solved by the BFGS method.
			\item \textbf{Rotation-Fmin}: the proposed rotation 
			modeling 
			solved using MATLAB convex optimization tool {\tt fmincon}. 
			\item \textbf{GSVD}: GSVD-based beamforming with optimal power allocation \cite{fakoorian2012optimal},  as described in 
			Section~\ref{sec_alg_sub_gsvd}. 
			\item \textbf{AOWF}: alternating optimization and water-filling\cite{li2013transmit}.
		\end{itemize} 
		All results are based on averaging over 1000 realizations of independent  $\mathbf{H} $ and $\mathbf{G} $. 
		These entries of $ \mathbf{H} $ and $ \mathbf{G}$ are generated based 
		on the standard Gaussian distribution, i.e., $ \mathcal{N} (0, 1)$. Two 
		performance metrics are considered. In the first part, we focus on the 
		achievable secrecy rate. In the second part, the time consumption is 
		analyzed for different methods in various antenna configurations.

		\begin{figure}[t]
			\centering
			\subfigure[{Relative rate improvement} ($\%$) of the 
			proposed method  
			 to  
			GSVD.]{
				\includegraphics[width=0.45\textwidth]{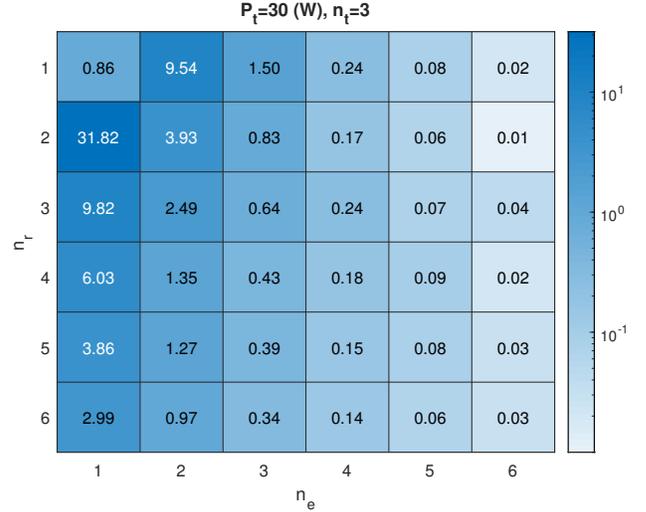}
				\label{fig_WTcomp_a}}
			\subfigure[{Relative rate improvement} ($\%$)  of the 
			proposed method 
		to AOWF.]{ 
				\includegraphics[width=0.45\textwidth]{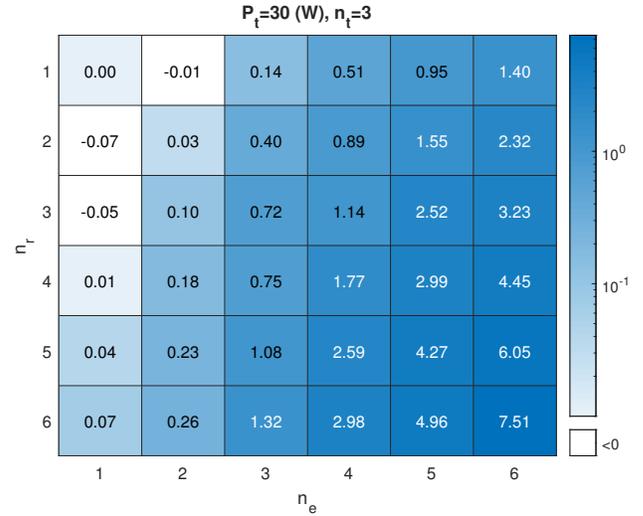}\label{fig_WTcomp_b}}
			\caption{Comparisons between the secrecy transmission 
			rates of 
				the proposed  method with GSVD and AOWF.}  
			\label{fig_WTcomp}
		\end{figure}

		 \subsection{Achievable Secrecy Rate }\label{sec_res_sub1}

		 We evaluate the performance of the Rotation-BFGS, Rotation-Fmin, 
		 GSVD, 
		and AOWF approaches with respect to the  variation of  the 
		number of antennas. The average secrecy rates are listed 
		 in Tables~\ref{tab_rate_nt3} and \ref{tab_rate_nt4p}  respectively for $n_t=3$ 
		 and 
		$n_t=4$. As expected, increasing the number of antennas at the eavesdropper decreases secrecy rate. On the contrary, 
		the secrecy rate will increase when the legitimate receiver or transmitter 
		has higher number of antennas.  To better appreciate the 
		improvement 
		due to	our proposed numerical method,
		relative secrecy rate  improvement between Rotation-BFGS and GSVD 
		and Rotation-BFGS and AOWF is investigated in the following and 
		illustrated in Fig.~\ref{fig_WTcomp}. Let us define relative 
		rate improvement with respect to GSVD and AOWF, respectively,  as
			\begin{subequations}
				\begin{align}
				\eta_g = \frac{R_r-R_g}{R_g}\times 100\%,\\
				\eta_a = \frac{R_r-R_a}{R_a}\times 100\%,
				\end{align}
			\end{subequations}
			where $R_r$, $R_g$, and $R_a$ represent the secrecy rate achieved by 
		the proposed method, GSVD, and AOWF, respectively\footnote{ 
		Rotation-Fmin is similar to rotation-BFGS in terms of the 
		achievable rate (details are in Table~\ref{tab_rate_nt3} and 
		\ref{tab_rate_nt4p}), so relative rate improvements of that to GSVD and 
		AOWF 
		are the same as those in rotation-BFGS.}. As 
		shown in 
		Fig.~\ref{fig_WTcomp},  the proposed approach is capable of 
		achieving the same or  better secrecy rate in almost any antenna setting. The darker the cell color, the higher is improvement achieved by the proposed method 
		
			There is a clear pattern that 
			when the eavesdropper has 
		 a smaller number of antennas than the transmitter, the proposed method outperforms GSVD. The best case is  
		  $[n_r, n_e]=[2,1]$ where the improvement is about 32\%. 
		  On the other hand, for cases with larger $n_e$,  the rotation-BFGS 
		performs 
		better than AOWF. 
		 The best case is  
		$[n_r, n_e]=[6,6]$ where the improvement is 7.5\%. 	
		This pattern not only exist when $n_t=3$ but also at higher $n_t$s,  at least $n_t=4$. Some more detailed comparisons are presented as follows:}
		
	 \begin{figure}[tbp]
		\centering
		\subfigure[Achievable  secrecy rate  with relatively  small
		$n_e$s.]{
			\includegraphics[width=0.460\textwidth]{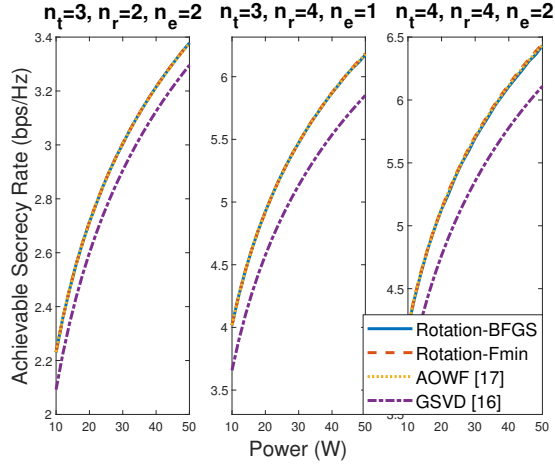}
			\label{fig_finallcomp02}}
		\subfigure[Achievable  secrecy rate  with relatively large 
		$n_e$s.]{
			\includegraphics[width=0.460\textwidth]{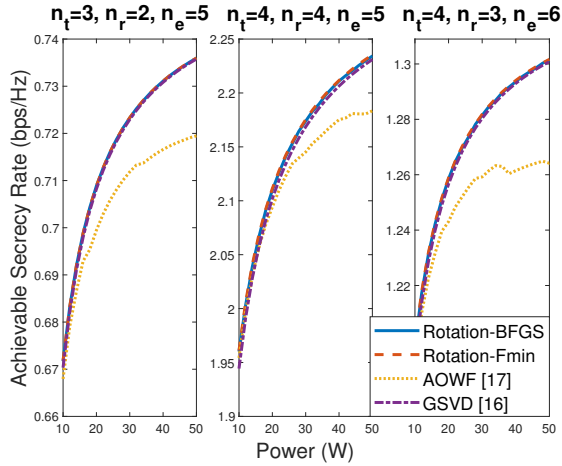}
			\label{fig_finallcomp03}}
		\caption{Secrecy rate of the MIMOME channel versus the 
		transmit 
			power. The proposed algorithms  (Rotation-BFGS and Rotation-Fmin) 
			are 
			compared 
			with GSVD and AOWF.}
		\label{fig_finallcomp}
	\end{figure}

		\subsubsection{The MIMOME with small 
		$n_e$}\label{sec_res_subsub1}
		GSVD-based beamforming fails to get close to the secrecy capacity. This 
		phenomenon has been verified by previous literature 
		\cite{vaezi2017journal}. 	As can be seen in Fig.~\ref{fig_finallcomp02},  
		rotation-BFGS and AOWF can achieve similar results. On the contrary,
		GSVD clearly has a gap with those methods which results from  
		sub-optimality of GSVD. 


		\subsubsection{The MIMOME with large 
		$n_e$}\label{sec_res_subsub2}
		In this setting,  AOWF does not perform very well because the Lagrange multiplier of AOWF cannot be obtained properly. 
		As illustrated in Fig.~\ref{fig_finallcomp03} and Fig.~\ref{fig_WTcomp}, our proposed method outperforms AOWF in this regime. 
		
%
		From the simulation results, it is seen that the proposed rotation-BFGS is robust in a wide range of practical antennas settings on each node. This is a big advantage as the 
		robustness towards the variation of $n_e$ is necessary to guarantee secure  
		communication. We note that the eavesdropper can have any number of 
		antennas, and the solution for wiretap channels should be robust to such variations. 	
	 This makes the proposed precoding highly competitive the existing solutions of the MIMOME channel. 
					\begin{figure}[t]
		\centering
		\subfigure[$n_t=3$.]{
			\includegraphics[width=0.45\textwidth]{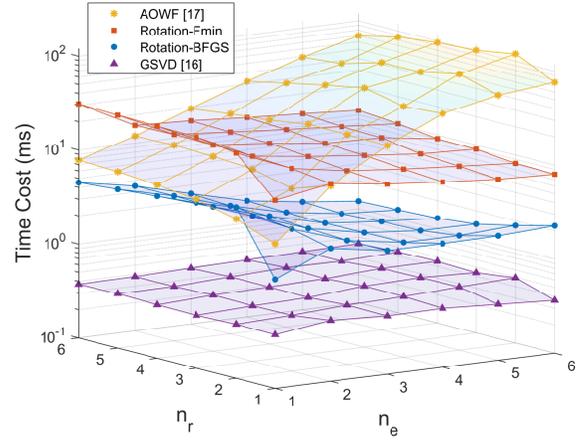}
			\label{fig_time_nt3}}
		\subfigure[$n_t=4$.]{
			\includegraphics[width=0.45\textwidth]{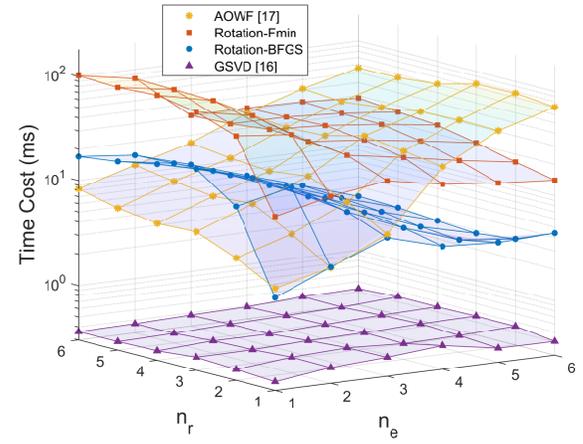}
			\label{fig_time_nt4}}
		\caption{Time costs of the proposed methods, GSVD 
			and AOWF, for (a) $n_t=3$, (b) $n_t=4$, with  $n_r$ and $n_e$ 
			from  1 to 6.}
		\label{fig_time} 
	\end{figure}

		\subsection{Time Consumption}\label{sec_res_sub2}
		Besides computational complexity, the execution time of each 
		algorithm is also important and provides a means to evaluate the 
		complexity. This 
		evaluation is shown in 
		Figs.~\ref{fig_time_nt3}-\ref{fig_time_nt4} for  $n_t=3$ 
		and 
		$n_t=4$ 
		respectively, which  are  
		averaged over 1000 channel realizations in each setting.  GSVD has, by 
		far, the best time cost since it  transforms the problem into finding the 
		best Lagrange multiplier. However, as we illustrated 
		previously, the precoding provided by GSVD may be far from the 
		optimal solution. The proposed rotation-BFGS 
		method clearly 
		outperforms AOWF, whose time consumption is very high, 
		especially when $n_e\in[4,5,6]$. The gap becomes
		larger when $n_e$ increases. In such cases, AOWF does not perform well either in 
		terms of achievable rate or time cost. On the other hand, our approach is robust in any antenna setting and more efficient in time 
		cost compared with AOWF. Rotation-Fmin achieves almost 
		the same performance, but its execution time is higher. This is 
		because 
		\texttt{fmincon} includes BFGS and interior-point method, {and instead 
		of the interior-point method, we 
			use a rectifier which is more efficient in terms of programming}.

			%
			In summary, the proposed rotation-BFGS method gives robust precoding and power allocation  
			for the MIMOME channel. By applying BFGS as an optimizer, the 
			rotation method has a better performance compared to other numerical 
			methods like AOWF in secrecy and time complexity.  In addition, the framework we provided in
			this paper can be used to solve many other problems, some listed in Section~\ref{sec_intro_sub_rWork}.

			\section{Conclusions}\label{sec_conclu}
			In this paper, we have developed a rotation-based method, called rotation-BFGS,  for precoding and 
			power allocation of Gaussian MIMOME channels. In this method,
			the transmit covariance matrix is constructed using Givens rotation matrices.  With this construction,  the PSD constraint of the transmit covariance matrix is removed and  the capacity   optimization problem is simplified.  The precoding (rotation) matrix and power allocation coefficients have been obtained iteratively using a modified BFGS algorithm. Compared to existing  approaches, the proposed method is robust  and performs well  independent of the number of  antennas at each node. The proposed 
			method  outperforms the GSVD-based beamforming when 
			$n_e< n_t$  and  AOWF particularly when $n_e\ge n_t$.
			This approach can also use the results of existing
			precoding and power allocation methods, such as the GSVD-based approach, as an initial point to expedite finding the  solution.

			In addition, the proposed rotation-BFGS method  has a great potential for finding precoding and 
			power allocation in various other applications, including in MIMO 
			broadcast channel and MIMO channel with and energy harvesting constraints, both with and without secrecy. Future works will focus on  further improving the efficiency of
			solving parameters and extension of this approach to other related problems.




			\appendices

\section{Line Search Algorithm for the BFGS Method}\label{app_bfgs}
The line search method we use is the Golden section search  
\cite{luenberger1984linear} and summarized in 
Algorithm~\ref{alg_Lineserach}, where $\alpha_i$ and  $f_i$ for 
$1\leq 
	i\leq4$ are temporary records for line search steps size and function 
values.

\begin{algorithm}[h]
	\caption{Line Search (Golden Section 
		\cite{luenberger1984linear}) for the BFGS }\label{alg_Lineserach}
	{
		\begin{algorithmic}[1]
			
			\STATE Requires from Algorithm~\ref{alg_RotMwthod}:  
			$\epsilon_2$, 
			$\mathbf{x}^{(k)}$, 
			$\mathbf{M}^{(k)}$, $\mathbf{g}^{(k)}$, 
			and $f^{(k)}$;
			\STATE Define: $\mathbf{x}\triangleq\mathbf{x}^{(k)}$, 
			$\mathbf{d}\triangleq\mathbf{M}^{(k)}\mathbf{g}^{(k)}$;
			\STATE	Initialize: $\epsilon_3=5\times 10^{-4}$, $\tau_1=3$, 
			$\tau_2=0.382$, 
			and $\tau_3=0.618$;
			\STATE Initialize: ${\bm \alpha}\triangleq[\alpha_1, 
			\alpha_2,\alpha_3,\alpha_4]=[0,0,0,0.1]$;
			\STATE Initialize: ${\bm 
				f}\triangleq[f_1,f_2,f_3,f_4]=[f^{(k)},0,0,f(\mathbf{x}-
			\alpha_4\mathbf{d})]$;
			\WHILE  {$\alpha_4<20$ and $f_1<f_4$}
			\STATE  Let $\alpha_4=\tau_1\alpha_4$ and 
			$f_4=f(\mathbf{x}-\alpha_4\mathbf{d})$;
			\ENDWHILE
			\STATE Let $\alpha_2=\tau_2\alpha_4$ and 
			$\alpha_3=\tau_3\alpha_4$;
			\STATE Let $f_2=f(\mathbf{x}-\alpha_2\mathbf{d})$ and 
			$f_3=f(\mathbf{x}-\alpha_3\mathbf{d})$;
			\WHILE {$\alpha_4-\alpha_1>\epsilon_3$ and 
				$\max({\bm f})-\min({\bm f})>\epsilon_2$}
			\STATE Define: $m$  is the index of the smallest element in ${\bm 
				f}$; 
			\STATE Define: $q_1\triangleq\max(1,m-1)$ and 
			$q_2\triangleq\min(4,m+1)$;
			\STATE Let $\alpha_1 = \alpha_{q_1}$ and $\alpha_4 = 
			\alpha_{q_2}$;
			\STATE Let $\alpha_2 =  \alpha_1+\tau_2(\alpha_4-\alpha_1)$ and 
			$\alpha_3 =  \alpha_1+\tau_3(\alpha_4-\alpha_1)$; 
			\IF {$m=2$}
			\STATE  Let $f_3=f_2$ and $f_2=f(\mathbf{x}-\alpha_2\mathbf{d})$;
			\ELSIF {$m=3$}
			\STATE Let  $f_2=f_3$ and  $f_3=f(\mathbf{x}-\alpha_3\mathbf{d})$;
			\ELSE
			\STATE Let $f_2=f(\mathbf{x}-\alpha_2\mathbf{d})$ and 
			$f_3=f(\mathbf{x}-\alpha_3\mathbf{d})$;
			\ENDIF
			\ENDWHILE
			\STATE Output: $\alpha^{\ast}=\alpha_m$.
	\end{algorithmic}}
\end{algorithm}



			
%
%
%
%
%
%
%
%
%
			
			
			\typeout{}
			\balance
			\bibliography{REF_commu_v1.0}
			\bibliographystyle{ieeetr}

\end{document}